\documentclass[journal,onecolumn]{IEEEtran}

\usepackage{amsfonts}
\usepackage{color}

\usepackage{amsmath}
\usepackage{amssymb}
\usepackage{graphicx}
\usepackage{blindtext}
\usepackage{titlesec}
\usepackage{enumitem}
\usepackage{authblk}
\usepackage{hyperref}%
\newcommand{\FundingLogos}{%
  \IfFileExists{eu_emblem.png}{\raisebox{0pt}{\includegraphics[height=1.5cm]{eu_emblem.png}}}{}%
  \hspace{1em}%
  \IfFileExists{erc_logo.png}{\raisebox{0pt}{\includegraphics[height=1.5cm]{erc_logo.png}}}{}%
}

\usepackage{tcolorbox}
\tcbset{colback=white,boxrule=0.5pt,arc=2pt,left=4pt,right=4pt,top=4pt,bottom=4pt}
\setcounter{MaxMatrixCols}{30}
\providecommand{\U}[1]{\protect\rule{.1in}{.1in}}
\newtheorem{theorem}{\textbf{Theorem}}

\newtheorem{corollary}[theorem]{Corollary}

\newtheorem{definition}[theorem]{\textbf{Definition}}
\newtheorem{example}[theorem]{Example}

\newtheorem{lemma}[theorem]{\textbf{Lemma}}

\newtheorem{proposition}[theorem]{\textbf{Proposition}}
\newtheorem{remark}[theorem]{\textbf{Remark}}

\newenvironment{proof}[1][Proof]{\noindent\textbf{#1.} }{\ \rule{0.5em}{0.5em}}

\newif\ifrevised
\newcommand{\revised}[1]{%
	\ifrevised
	\color{blue} #1 \color{black} %
	\else
	#1%
	\fi}
\revisedfalse

\newcommand{\norm}[1]{\left\| #1 \right\|}
\newcommand{\normsq}[1]{\left\| #1 \right\|^2}
\newcommand{\inner}[2]{\left< #1 , #2 \right>}

\let\originalleft\left
\let\originalright\right
\renewcommand{\left}{\mathopen{}\mathclose\bgroup\originalleft}
\renewcommand{\right}{\aftergroup\egroup\originalright}

%
\ifCLASSINFOpdf

\else

\fi

\hyphenation{op-tical net-works semi-conduc-tor}

\begin{document}

\title{\revised{Concavity of Tsallis Entropy and Tsallis Entropy Power along Heat Flow}}

\author{Lukang~Sun
\thanks{L. Sun was with the Department of Mathematics, Technical University of Munich, Munich,
Germany e-mail: lukang.sun@tum.de.}
}


\maketitle

\begin{abstract}
We study the evolution of Tsallis entropy along the heat flow and establish concavity results in arbitrary dimensions. Extending earlier one-dimensional results, we prove that Tsallis entropy is concave along the heat flow for $q\in(0,3]$ in dimension one and for $q\in[1,3]$ in higher dimensions. The upper endpoint $q=3$ is sharp in every dimension. The proof is based on a nonlinear transformation of the heat equation, a sharp dimension-free functional inequality with constant $C_u=3$, and a rigorous justification of the integration-by-parts identities used in the argument. The sharp inequality is proved by an explicit integration-by-parts sum-of-squares identity, rather than by a computer-assisted semidefinite-programming search. As consequences, we recover a generalized de Bruijn identity, prove monotonicity of the associated $q$-Fisher information along the heat flow, and establish concavity results for Tsallis entropy power, including the Shannon entropy-power case and Costa's EPI as an endpoint. We also obtain an asymptotic entropy-power concavity statement for general initial data and a sharp auxiliary functional inequality which may be of independent analytic interest.
\end{abstract}

\IEEEpeerreviewmaketitle

\section{Introduction}

The interplay between entropy and diffusion processes has long been a central theme in information theory, probability, and analysis. A classical example is the evolution of Shannon entropy along the heat flow. This evolution is closely connected with the de Bruijn identity, Fisher information, and the entropy power inequality, and has important applications in communication theory, statistics, and high-dimensional probability; see, for instance, \cite{barron1986entropy,rioul2010information,valero2017generalization}. Beyond first-order monotonicity, the study of higher-order behavior of entropy functionals along diffusion semigroups has also attracted considerable attention. In particular, Costa's concavity theorem for Shannon entropy power along the heat flow \cite{costa1985new} gives a refinement of the entropy power inequality and has motivated many further developments on entropy, diffusion, and functional inequalities.

In parallel, generalized entropy functionals have been studied extensively as non-additive extensions of the Shannon framework; see, for example, \cite{cheng2015higher,hung2022generalization,ledoux2016heat,ledoux2021log,bukal2022concavity,wibisono2018convexity,wu2025completely,savare2014concavity}. Among them, Tsallis entropy is a particularly important example. It depends on an entropic index $q$ and is commonly used to describe non-Gaussian, heavy-tailed, long-range dependent, or non-equilibrium phenomena. From an information-theoretic viewpoint, Tsallis entropy leads naturally to generalized Fisher information and entropy-power functionals. From an analytic viewpoint, it is closely related to diffusion equations and smoothing estimates. Thus the study of Tsallis entropy along the heat flow provides a natural way to test which parts of the classical Shannon theory remain valid in the non-additive setting.

Despite these motivations, the concavity behavior of Tsallis entropy along the heat flow is not fully understood in higher dimensions. Previous works have obtained detailed information on higher-order derivatives of Tsallis and R\'enyi entropies mainly in one dimension; see \cite{hung2022generalization,wu2025completely}. However, the extension to several dimensions is not straightforward. One-dimensional identities often rely on scalar differential structures which do not have direct analogues in higher dimensions. In higher dimensions, Hessian terms, Laplacian terms, and mixed gradient-Hessian terms interact in a more complicated way. The main purpose of this paper is to overcome this difficulty and prove a dimension-uniform second-order concavity theorem for Tsallis entropy along the heat flow.

Our main result shows that Tsallis entropy is concave along the heat flow for $q\in(0,3]$ in dimension one and for $q\in[1,3]$ in all higher dimensions. The upper endpoint $q=3$ is sharp in every dimension. As consequences of the proof, we obtain the generalized de Bruijn identity for Tsallis entropy and prove monotonicity of the corresponding $q$-Fisher information along the heat flow. We also establish concavity results for Tsallis entropy power. In particular, the entropy-power theorem includes the Shannon entropy-power case and recovers Costa's EPI as the endpoint $q=1$ with the classical normalization.
{
\paragraph{The SOS framework and the present method.}
Several works closely related to the present paper use the sum-of-squares
(SOS) framework to study signs of entropy derivatives along the heat flow; see,
for example,
\cite{cheng2015higher,guo2022lower,hung2022generalization,wu2025completely}.
In the usual SOS strategy, one first rewrites the entropy derivative, after
integration by parts, as a quadratic form in a vector of differential
monomials, together with possible lower-order nonnegative terms. The sign
problem is then reduced to proving positive semidefiniteness of an associated matrix under algebraic constraints. This approach is powerful, but the
number of monomials and constraints grows rapidly with the order of the
derivative and with the dimension. Consequently, in many cases the construction
of the SOS representation relies on computer-assisted semidefinite programming,
curve fitting, or other algebraic search procedures.

The proof in the present paper is closely connected with this framework. It
relies on integration by parts together with square completions, and therefore
can naturally be viewed as an argument within the SOS framework. The novelty is not that the final positivity
mechanism lies outside the SOS framework, but rather the way in which the
relevant SOS identity is found. Instead of starting from a large vector of
derivative monomials and searching for a positive semidefinite  matrix, we
introduce the nonlinear variable $u_t=\phi_t^{q/2}$. This transforms the
second-derivative calculation into a simple normal form in which the dependence
on the entropy parameter is separated from the analytic inequalities. The key
functional inequalities are then obtained from a small number of
integration-by-parts identities and elementary square completions.

This gives a new analytic route for constructing the required SOS identity. In
particular, the method produces the sharp dimension-free constant $C_u=3$
directly and extends without any additional algebraic search to arbitrary
dimensions. In this sense, the present approach should be viewed as an analytic
construction of an SOS identity through the nonlinear diffusion variable, rather
than as a departure from the SOS philosophy. Its practical advantage over the
standard SOS strategy is that it avoids the high-dimensional monomial-matrix
search and makes the dimension-uniform structure of the problem transparent.
\paragraph{Necessity and application value of the results.}
The results are useful because heat flow is the analytic model for Gaussian smoothing, or equivalently for adding independent Gaussian noise. Understanding how Tsallis entropy, Tsallis-Fisher information, and Tsallis entropy power evolve under this operation is relevant whenever the classical Shannon entropy is not the most suitable measure of uncertainty. This occurs, for example, in models with heavy tails, non-Gaussian fluctuations, long-range dependence, or non-equilibrium statistics. The generalized de Bruijn identity identifies the rate of increase of Tsallis entropy under Gaussian smoothing with a Tsallis-Fisher information functional. The monotonicity result then shows that this generalized Fisher information dissipates along the heat flow, giving a quantitative smoothing principle in the non-additive setting. The entropy-power concavity result is also natural from an information-theoretic perspective. In the Shannon case, entropy power concavity is closely related to Gaussian extremality and additive Gaussian noise channels. The Tsallis entropy-power result developed here gives an analogue in a non-additive regime and may be useful for studying generalized entropy power inequalities, stability under Gaussian perturbations, and large-noise behavior of non-Gaussian distributions. The asymptotic concavity statement is especially relevant in this respect: even when a smallness condition on the initial density is not imposed, the heat flow eventually enters a regime in which the Tsallis entropy power becomes concave.

\paragraph{Recent counterexamples and interpretation of the second derivative.}
Recent counterexamples to complete-monotonicity-type conjectures show that sign patterns for entropy derivatives along the heat flow must be interpreted carefully. After the first version of this work, Gu and Sellke \cite{GuSellke2026} constructed a one-dimensional counterexample to the Gaussian completely monotone conjecture, showing that a higher-order derivative of Shannon entropy can have the opposite sign from the conjectured one. Zou, Fan, Gao, and Wang \cite{ZouFanGaoWang2026} constructed a two-dimensional counterexample to the log-convexity of Fisher information along the heat flow, with consequences in higher dimensions by tensorization. These developments do not contradict the present results, because our theorem concerns a second-order concavity property of Tsallis entropy and Tsallis entropy power in specified parameter ranges. Rather, they clarify the scope of the result. The second derivative of Tsallis entropy should not be viewed as part of an unrestricted complete-monotonicity principle. Instead, its sign is a finite-order, parameter-dependent, and dimension-dependent property. The sharp endpoint $q=3$ and the distinction between the one-dimensional range and the higher-dimensional range are consistent with this viewpoint. The recent counterexamples therefore strengthen the motivation for proving precise second-order statements with explicit parameter ranges, rather than relying on broad expectations about all higher derivatives.

The works most closely related to ours include \cite{hung2022generalization,wu2025completely,guo2022lower,cheng2015higher}. The papers \cite{hung2022generalization,wu2025completely} study higher-order derivatives of Tsallis and R\'enyi entropies along the heat flow in one dimension. The works \cite{cheng2015higher,guo2022lower} focus on the Shannon entropy setting and use SOS-type methods to study entropy derivatives and related conjectures in low dimensions. Compared with these works, the present paper focuses on a second-order Tsallis entropy problem but treats arbitrary dimensions. The central point is that the nonlinear transformation reveals a simple dimension-uniform structure which is hidden in the original heat-equation variables.

We emphasize that the present work does not attempt to establish a general complete-monotonicity theorem for Tsallis entropy. Higher-order derivatives would require new identities and new functional inequalities, and the recent counterexamples mentioned above indicate that unrestricted sign patterns should not be expected without additional assumptions. The contribution of the paper is instead a sharp second-order concavity theorem for Tsallis entropy, together with entropy-power consequences and functional inequalities that may be useful in further work on generalized entropy functionals.
}
\paragraph{Notation}
For a matrix \(A\), \(\|A\|^{2}=\sum _{i,j}a_{ij}^{2}\) denotes the squared Frobenius norm; for a vector \(x\in\mathbb{R}^d\), \(\|x\|^{2}=\sum _{i=1}^dx_{i}^{2}\) denotes the squared Euclidean norm. Unless otherwise specified, all integrations are performed over the entire Euclidean space \(\mathbb{R}^{d}\).

\paragraph{Organization}
The remainder of this paper is organized as follows. Section \ref{sec:1} introduces the setting, the main definitions, and the statements of the main results. Section \ref{sec:2} justifies the integration-by-parts identities used throughout the paper. Section \ref{sec:3} proves the Tsallis entropy concavity theorem and the sharp functional inequalities. Section \ref{sec:5} proves the concavity results for Tsallis entropy power. Section \ref{sec:4} contains concluding remarks.
\section{Problem Set Up and Main Results}\label{sec:1}

Given a non-negative initial density \(\phi _{0}\) with \(\|\phi _{0}\|_{L^{1}}=1\), define the heat flow \(\phi _{t}:=\phi _{0}*p_{t}\), where \(*\) denotes the convolution operator and \(p_{t}(x):=(4\pi t)^{-\frac{d}{2}}e^{-\frac{\|x\|^{2}}{4t}}\) represents the standard Gaussian heat kernel. It follows that \(\phi _{t}\) is the solution to the heat equation:
\begin{equation}\label{eq:ieee1}\partial_t\phi_t = \Delta\phi_t, \quad \lim_{t \to 0}\phi_t = \phi_0.\end{equation}
It can be verified that for any \(t>0\), the above defined \(\phi _{t}\) is positive with $\norm{\phi_t}_{L^1}=1$ and smooth on \(\mathbb{R}^{d}\).

\begin{definition}\label{def:ieee1}For a probability density function \(\rho \), the Tsallis entropy with entropic index \(q\in (0,1)\cup (1,\infty )\) is defined as:\begin{equation}S_q(\rho) = \frac{1 - \int \rho^q(x) dx}{q-1}.\end{equation}\end{definition}
In the limit as \(q\rightarrow 1\), \(S_{q}(\rho )\) converges to the Shannon entropy (or Boltzmann-Gibbs entropy):\begin{equation}H(\rho) = -\int_{\mathbb{R}^d} \rho(x) \log \rho(x)  dx.\end{equation}

\paragraph{Concavity of Tsallis Entropy}
We now state the main theorems of this paper.
\begin{theorem}\label{thm:ieee1}
Along the heat flow \(\phi_t\), the Tsallis entropy \(S_q(\phi_t)\) is concave in time, namely
\begin{equation}
\frac{d^2}{dt^2}S_q(\phi_t)\le 0,
\end{equation}
for the entropic index \(q\) in the following ranges:
\begin{itemize}\addtolength{\itemindent}{3cm}
    \item \(q\in(0,3]\) in the one-dimensional case \((d=1)\);
    \item \(q\in[1,3]\) in the multidimensional case \((d\ge2)\).
\end{itemize}
The upper endpoint of the admissible index range, namely $q=3$, is sharp in every dimension.
\end{theorem}

The proof is given in Section \ref{sec:3}. The sharp upper endpoint $q=3$
follows from the sharp dimension-free constant $C_u=3$ in the functional
inequality \eqref{eq:74}. While the concavity of Tsallis entropy for
$q\in(0,3]$ was proved in dimension one in
\cite{hung2022generalization,wu2025completely}, the proof below is different
and extends directly to higher dimensions in the range $1\le q\le 3$. The
lower endpoint $q=1$ in higher dimensions is discussed in Remark \ref{rmk:20}.

Through the proof of Theorem \ref{thm:ieee1} in Section \ref{sec:3}, we also obtain the generalized de Bruijn identity and monotonicity of the associated \(q\)-Fisher information.
\begin{corollary}
Under the index range specified in Theorem \ref{thm:ieee1}, the following properties hold along the heat flow:
\begin{itemize}\addtolength{\itemindent}{3cm}
    \item \(\displaystyle \frac{d}{dt}S_q(\phi_t)=\frac{4}{q}\int \|\nabla \phi_t^{q/2}\|^2 dx=:I_q(\phi_t);\)
    \item \(\displaystyle \frac{d}{dt}I_q(\phi_t)\le0.\)
\end{itemize}
\end{corollary}
Here \(I_q\), up to normalization, is the Tsallis--Fisher or \(q\)-Fisher information.  The first identity is the generalized de Bruijn identity, while the second follows from the concavity of \(S_q(\phi_t)\).

\paragraph{Concavity of Tsallis Entropy Power}
The next results concern the entropy-power functional associated with Tsallis entropy.
\begin{definition}[Tsallis Entropy Power]\label{deff:ieee1}
For \(q\in(0,1)\cup(1,\infty)\) and \(\mu>0\), define
\begin{equation}
N_{q,\mu}(\rho)=\exp\left(\frac{\mu}{d}S_q(\rho)\right).
\end{equation}
At \(q=1\), we use the Shannon limit
\begin{equation}
N_{1,\mu}(\rho)=\exp\left(\frac{\mu}{d}H(\rho)\right).
\end{equation}
For \(q=1, \mu=2\), this is the classical Shannon entropy power.
\end{definition}

Costa's entropy-power concavity theorem \cite{costa1985new} states that
{\begin{equation}
\frac{d^2}{dt^2}N_{1,2}(\phi_t)\le0.
\end{equation}}
The entropy-power theorem below contains this result as the endpoint \(q=1\), \(\mu=2\).  We also recall that Savar\'e and Toscani~\cite{savare2014concavity} proved an entropy-power concavity theorem for R\'enyi entropy along the porous medium equation; the present result concerns Tsallis entropy along the linear heat flow.

\begin{theorem}\label{thmm:2}
Along the heat flow \(\phi_t\), the following statements hold.
\begin{itemize}\addtolength{\itemindent}{0cm}
    \item If \(q=1\) and \(0<\mu\le2\), then \(\displaystyle \frac{d^2}{dt^2}N_{1,\mu}(\phi_t)\le0\).  In particular, \(N_{1,2}\) is concave along the heat flow.
    \item If \(1<q\le2\) and
    \begin{equation}
    \int \phi_0^qdx\le \frac{d(q-1)+2(2-q)}{\mu},
    \end{equation}
    then \(\displaystyle \frac{d^2}{dt^2}N_{q,\mu}(\phi_t)\le0\) for all \(t>0\).
    \item If \(2\le q<3\) and
    \begin{equation}
    \int \phi_0^qdx\le \frac{d(3-q)}{\mu},
    \end{equation}
    then \(\displaystyle \frac{d^2}{dt^2}N_{q,\mu}(\phi_t)\le0\) for all \(t>0\).
\end{itemize}
\end{theorem}
The smallness condition on $\int \phi_0^q\,dx$ is natural in the study of
Tsallis entropy power along the heat flow. The heat-kernel example below shows
that, when $q>1$, concavity may fail at small times without such an assumption.
We do not claim that the threshold values obtained here are sharp. In the
limiting case $\mu=2$ and $q\to 1$, our result recovers Costa's EPI.

\begin{theorem}\label{thm:asymptotic-power}
Let \(\phi_0\) be an arbitrary probability measure and let \(\phi_t=\phi_0*p_t\).  For every \(1<q<3\) and \(\mu>0\), there exists an explicit constant \(T_{d,q,\mu}<\infty\), given in Section \ref{sec:5}, such that
\begin{equation}
\frac{d^2}{dt^2}N_{q,\mu}(\phi_t)\le0
\qquad\text{for all }t\ge T_{d,q,\mu}.
\end{equation}
\end{theorem}

The following example illustrates why the restriction on the initial condition and the large-time regime are necessary for the concavity of the Tsallis entropy power.

\begin{example}
	Along the heat kernel $p_t$, we have
	\begin{equation}
		\frac{d^2}{dt^2} N_{q,\mu}(p_t)
		=\lambda N_{q,\mu}(p_t)t^{-\left(\frac{d(q-1)}{2} + 2\right)}
		\left[ \lambda t^{-\frac{d(q-1)}{2}} - \left( \frac{d(q-1)}{2} + 1 \right) \right],
		\quad 
		\lambda=\frac{\mu}{2} q^{-d/2} (4\pi)^{-\frac{d(q-1)}{2}}.
	\end{equation}
	It follows that $\frac{d^2}{dt^2} N_{q,\mu}(p_t) \leq 0$ only for $t$ larger than a certain threshold, which can be explicitly determined from the above expression. For sufficiently small $t$, we have $\frac{d^2}{dt^2} N_{q,\mu}(p_t) > 0$, since in this regime $\int p_t^q dx$ does not satisfy the smallness condition imposed on the initial data in Theorem \ref{thmm:2}.
\end{example}

\paragraph{{Nonlinear Transform}}To prove the theorems, we introduce a transformation of \(\phi _{t}\) that allows for a unified calculation and estimation of the temporal derivatives of the Tsallis entropy under the heat equation.


In the following sections, without loss of generality, we let \(u_{t}=(u_{\delta })_{t}:=\phi_t^{q/2},q=\frac{2}{1+\delta}\),  as the parameter dependence on \(\delta \) is fixed and implicit throughout the analysis. We have the following lemma.
 \begin{lemma}
     We have $u_t$ satisfies the following non-linear equation
     \begin{equation}\label{eq:ieee2}
         \partial_tu_t=\Delta u_t+\delta\frac{\normsq{\nabla u_t}}{u_t}.
     \end{equation}
 \end{lemma}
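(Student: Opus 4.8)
The plan is a direct chain-rule computation. Since $\phi_t$ is positive and smooth on $\mathbb{R}^d$ for every $t>0$, the power $u_t=\phi_t^{p}$ with $p=\frac{1}{1+\delta}$ is likewise positive and smooth, so all the differentiations below are justified pointwise. First I would differentiate in time and invoke the heat equation \eqref{eq:ieee1}: $\partial_t u_t = p\,\phi_t^{p-1}\partial_t\phi_t = p\,\phi_t^{p-1}\Delta\phi_t$. Next I would compute the spatial derivatives by the chain and product rules: $\nabla u_t = p\,\phi_t^{p-1}\nabla\phi_t$, hence $\normsq{\nabla u_t} = p^{2}\,\phi_t^{2p-2}\normsq{\nabla\phi_t}$, and, applying the divergence to $\nabla u_t$, $\Delta u_t = p(p-1)\,\phi_t^{p-2}\normsq{\nabla\phi_t} + p\,\phi_t^{p-1}\Delta\phi_t$.

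Assembling the right-hand side of \eqref{eq:ieee2} then gives
\begin{equation*}
\Delta u_t + \delta\frac{\normsq{\nabla u_t}}{u_t} = p\,\phi_t^{p-1}\Delta\phi_t + \bigl(p(p-1)+\delta p^{2}\bigr)\phi_t^{p-2}\normsq{\nabla\phi_t}.
\end{equation*}
Comparing with the expression obtained for $\partial_t u_t$, the lemma reduces to the scalar identity $p(p-1)+\delta p^{2}=0$, i.e. $p\bigl((1+\delta)p-1\bigr)=0$, which holds precisely because $p=\frac{1}{1+\delta}$.

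There is no genuine obstacle in this argument; the only points deserving a word of care are the smoothness and positivity of $u_t$ (already recorded above, inherited from those of $\phi_t$ for $t>0$), and the observation that the choice $p=\frac{1}{1+\delta}$ is exactly the one that annihilates the $\normsq{\nabla\phi_t}$ term — which is the whole reason this particular substitution is introduced, and what turns the linear heat equation into the nonlinear equation \eqref{eq:ieee2} for $u_t$.
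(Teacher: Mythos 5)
Your computation is correct and matches the paper's own proof in all essentials: both proceed by direct chain-rule computation of $\partial_t u_t$, $\nabla u_t$, and $\Delta u_t$, with the lemma reducing to the cancellation of the $\normsq{\nabla\phi_t}$ coefficient, which holds exactly because $p=\frac{1}{1+\delta}$. The only cosmetic difference is that you verify the identity by assembling the right-hand side, whereas the paper solves for the coefficient of $\normsq{\nabla u_t}/u_t$; these are the same argument.
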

 {
\begin{remark}
We note that although the transformation introduces a nonlinearity into
Equation \eqref{eq:ieee2}, it remains advantageous because the equation retains
a linear dependence on the parameter \(\delta\).  More importantly, the
transformation puts the second-derivative calculation into a much simpler
normal form.

Indeed, the transformation is not logically necessary in the sense that, for
every \(t>0\), the heat flow satisfies \(\phi_t>0\), and therefore
\[
u_t=\phi_t^{1/(1+\delta)}
\]
is in one-to-one correspondence with \(\phi_t\).  Thus any functional
inequality proved for \(u_t\) can be rewritten as an equivalent inequality in
the original variable \(\phi_t\), and conversely.  However, the pulled-back
inequality in \(\phi_t\) is considerably less transparent.  For example, if
one works directly with $\phi_t$,
then a direct differentiation along the heat equation gives
\[
\frac{d^2}{dt^2}S_q(\phi_t)
=
-q\left[
2\int \phi_t^{q-2}(\Delta\phi_t)^2dx
+
(q-2)\int \phi_t^{q-3}\|\nabla\phi_t\|^2\Delta\phi_t\,dx
\right].
\]
Thus the desired sign would require a \(q\)-dependent weighted inequality
involving the two terms above.  If the mixed term is further integrated by
parts, one obtains additional weighted expressions involving
\(\nabla^2\phi_t(\nabla\phi_t,\nabla\phi_t)\) and
\(\|\nabla\phi_t\|^4\), with weights depending on powers of \(\phi_t\).  These
terms obscure the underlying square structure and make the dependence on \(q\)
less transparent.

By contrast, after setting
\[
u_t=\phi_t^{1/(1+\delta)}=\phi_t^{q/2},
\qquad
q=\frac{2}{1+\delta},
\]
we have the simple identity
\[
\int \phi_t^q dx=\int u_t^2 dx.
\]
Moreover, \(u_t\) solves
\[
\partial_t u_t
=
\Delta u_t
+
\delta\frac{\|\nabla u_t\|^2}{u_t},
\]
where the parameter \(\delta\) enters linearly.  Consequently, the second
derivative of the entropy is reduced to estimating only one nonlinear term:
\[
\frac{d^2}{dt^2}S_q(\phi_t)
=
-4(1+\delta)
\left[
\int(\Delta u_t)^2dx
+
\delta\int \Delta u_t
\frac{\|\nabla u_t\|^2}{u_t}dx
\right].
\]
In other words, the sign problem becomes the study of
\[
D+\delta A,
\qquad
D:=\int(\Delta u_t)^2dx,
\qquad
A:=\int \Delta u_t\frac{\|\nabla u_t\|^2}{u_t}dx.
\]
The functional inequalities needed later are then the dimension-uniform
estimates
\[
D+A\ge0,
\qquad
A\le3D.
\]
The first identity has the exact square form
\[
D+A
=
\int u_t^2\|\nabla^2\log u_t\|^2dx,
\]
and the second estimate follows from an explicit integration-by-parts
sum-of-squares identity.  Hence the transformation separates the role of the
entropy parameter from the analytic inequalities: the parameter \(q\), or
equivalently \(\delta\), appears only as a scalar coefficient in \(D+\delta A\),
while the inequalities for \(D\) and \(A\) themselves are independent of \(q\).
This is the main technical benefit of introducing the nonlinear variable.  It
does not create a fundamentally different problem, but it reveals the simple
unweighted structure that is hidden in the original heat-equation variables. See Section \ref{sec:3} for detailed derivation.
\end{remark}
}

 \begin{proof}
     The proof follows from direct computation. We have 
     \begin{equation}\label{eq:ieee3}
    \partial_t u_t = p\phi_t^{p-1} \partial_t \phi_t = p\phi_t^{p-1} \Delta \phi_t,
\end{equation}the gradient of $u_t$ is
\begin{equation}
    \nabla u_t = p\phi_t^{p-1} \nabla \phi_t \implies \nabla \phi_t = \frac{1}{p} \phi_t^{1-p} \nabla u_t,
\end{equation}
the Laplacian of $u_t$ is
\begin{align}\label{eq:ieee5}
    \Delta u_t &= \nabla \cdot (p\phi_t^{p-1} \nabla \phi_t)= p(p-1)\phi_t^{p-2} \normsq{\nabla \phi_t} + p\phi_t^{p-1} \Delta \phi_t.
\end{align}
Substitute the time derivative from Equation \eqref{eq:ieee3} into Equation \eqref{eq:ieee5}, we have
\begin{equation}
    \Delta u_t = p(p-1)\phi_t^{p-2} \normsq{\nabla \phi_t} + \partial_t u_t.
\end{equation}
Substitute $\normsq{\nabla \phi_t} = \frac{1}{p^2} \phi_t^{2-2p} \normsq{\nabla u_t}$ into the equation, we get
\begin{align}
    \partial_t u_t &= \Delta u_t - p(p-1)\phi_t^{p-2} \left( \frac{1}{p^2} \phi_t^{2-2p} \normsq{\nabla u_t} \right) \\
    &= \Delta u_t - \frac{p-1}{p} \phi_t^{-p} \normsq{\nabla u_t}
\end{align}
Since $\phi_t^{-p} = \frac{1}{u_t}$ and $p = \frac{1}{1+\delta}$, the coefficient is
\begin{equation}
    -\frac{p-1}{p} = \frac{1-p}{p} = \frac{1}{p} - 1 = (1+\delta) - 1 = \delta,
\end{equation}
therefore, $u_t$ satisfies
\begin{equation}
    \partial_t u_t = \Delta u_t + \delta \frac{\normsq{\nabla u_t}}{u_t}.
\end{equation}
 \end{proof}

\section{Verification of Integration by Parts}\label{sec:2}
Further analysis of Equation \eqref{eq:ieee2} relies on the application of integration by parts. To ensure the validity of these operations—specifically the vanishing of boundary terms at infinity—we establish a series of preliminary estimates. \revised{Throughout the remainder of this paper, we assume that $\delta \in (-1,1]$ when $d>1$, while in the one-dimensional case $d=1$, we allow $\delta \in (-1,\infty)$.} The rigorous justification for these integration-by-parts identities is formalized in Proposition \ref{prop:11}, which serves as the primary objective of this subsection. To establish this result, we first derive several necessary auxiliary bounds.
\begin{lemma}\label{lem:9}
    For any \(t>0\) and \(p>1\), let \(q\) satisfy the conjugate exponent relation \(p^{-1}+q^{-1}=1\). The following pointwise estimate holds for the derivatives of the density \(\phi _{t}\):
    \begin{equation}\label{eq:ieee11}
    \max_{i,j,k}\{\frac{|\partial_i\phi_t|}{\phi_t^{\frac{1}{p}}}(x),\frac{|\partial_{ij}\phi_t|}{\phi_t^{\frac{1}{p}}}(x),\frac{|\partial_{ijk}\phi_t|}{\phi_t^{\frac{1}{p}}}(x)\}\leq \Big(\int Q^q_t(x-y)p_t(x-y)\phi_0dy\Big)^{\frac{1}{q}}\revised{\leq C^{\frac{1}{q}}_t\phi_{2t}^{\frac{1}{q}}},
\end{equation}
where \(p_{t}(x)\) denotes the heat kernel and \(Q_{t}(x)\) is a polynomial in \(x\) with time-dependent coefficients. 
\end{lemma}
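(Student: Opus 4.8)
The strategy is to differentiate the convolution representation $\phi_t = \phi_0 * p_t$ directly, moving the derivatives onto the smooth Gaussian kernel, and then apply Hölder's inequality with the conjugate pair $(p,q)$. First I would write, for a multi-index $\alpha$ with $|\alpha|\le 3$,
\begin{equation}\label{eq:ieee-plan1}
\partial^\alpha \phi_t(x) = \int \partial^\alpha_x p_t(x-y)\,\phi_0(y)\,dy = \int \big(R_\alpha(x-y,t)\,p_t(x-y)\big)\,\phi_0(y)\,dy,
\end{equation}
where $R_\alpha(z,t)$ collects the Hermite-type prefactors produced by differentiating $p_t(z) = (4\pi t)^{-d/2}e^{-|z|^2/4t}$; explicitly $\partial_i p_t = -\frac{z_i}{2t}p_t$, and higher derivatives give polynomials in $z$ with coefficients that are negative powers of $t$. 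One then defines $Q_t(z)$ to be a single polynomial (with time-dependent coefficients) dominating $|R_\alpha(z,t)|$ for all the finitely many multi-indices $i$, $ij$, $ijk$ simultaneously — e.g. take $Q_t$ with nonnegative coefficients so that $|R_\alpha(z,t)|\le Q_t(z)$ pointwise for every such $\alpha$.

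Next, split the kernel as $p_t = p_t^{1/q}\cdot p_t^{1/p}$ and write $\phi_0 = \phi_0^{1/q}\cdot\phi_0^{1/p}$ (legitimate since $\phi_0\ge 0$; on the set where $\phi_0=0$ there is nothing to estimate). Applying Hölder's inequality in the $y$ variable to \eqref{eq:ieee-plan1} gives
\begin{equation}\label{eq:ieee-plan2}
|\partial^\alpha\phi_t(x)| \le \Big(\int Q_t^q(x-y)\,p_t(x-y)\,\phi_0(y)\,dy\Big)^{1/q}\Big(\int p_t(x-y)\,\phi_0(y)\,dy\Big)^{1/p} = \Big(\int Q_t^q(x-y)\,p_t(x-y)\,\phi_0(y)\,dy\Big)^{1/q}\,\phi_t(x)^{1/p}.
\end{equation}
Dividing through by $\phi_t(x)^{1/p}$ (which is strictly positive for $t>0$) yields the claimed bound for each of the three quantities, and taking the maximum over $i,j,k$ on the left finishes the proof, since the right-hand side was chosen uniform in $\alpha$.

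The only genuinely delicate point is the bookkeeping for $Q_t$: one must check that differentiating the Gaussian up to third order really does produce only polynomial prefactors (times $p_t$), and then exhibit a \emph{single} polynomial $Q_t$ with nonnegative coefficients dominating all of them — this is routine but must be done carefully so that the subsequent use of the lemma (where $Q_t^q p_t$ is integrated against $\phi_0$) is clearly finite; finiteness follows because $Q_t^q(z)p_t(z)$ is a polynomial times a Gaussian, hence bounded and integrable, and $\|\phi_0\|_{L^1}=1$. A secondary point is justifying differentiation under the integral sign in \eqref{eq:ieee-plan1}, which is standard: the differentiated integrands are dominated locally uniformly in $x$ by an integrable function of $y$, again because polynomial-times-Gaussian factors decay rapidly. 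I would state these as remarks rather than belabor them.
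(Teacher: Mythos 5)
Your proposal is correct and follows essentially the same route as the paper: differentiate the convolution to get a polynomial-times-Gaussian representation of $\partial^\alpha\phi_t$, dominate the finitely many prefactors by a single polynomial $Q_t$, apply H\"older's inequality with respect to the measure $p_t(x-y)\phi_0(y)\,dy$ to peel off the factor $\phi_t^{1/p}$, and divide. The only differences are cosmetic (the paper applies H\"older directly to the probability-like measure rather than splitting $p_t\phi_0$ into $1/p$ and $1/q$ powers, and it labels the inequality ``Cauchy--Schwartz''), plus your added remarks on differentiation under the integral sign, which the paper leaves implicit.
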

\begin{proof}
Since 
\begin{equation}
    \phi_t=\int p_t(x-y)\phi_0(y)dy,
\end{equation}
so by direct computation, we have
\begin{equation}
    \begin{aligned}
        \partial_i\phi_t&=\int Q_i(x-y,t^{-1})p_t(x-y)\phi_0(y)dy\\
        \partial_{ij}\phi_t&=\int Q_{ij}(x-y,t^{-1})p_t(x-y)\phi_0(y)dy\\
        \partial_{ijk}\phi_t&=\int Q_{ijk}(x-y,t^{-1})p_t(x-y)\phi_0(y)dy,
    \end{aligned}
\end{equation}
for some polynomials $Q_i,Q_{ij},Q_{ijk}$. We will denote 
\begin{equation}
    Q_t(x-y)=\sum_{i,j,k}|Q_i(x-y,t^{-1})|+|Q_{ij}(x-y,t^{-1})|+|Q_{ijk}(x-y,t^{-1})|,
\end{equation}
then
\begin{equation}
    \begin{aligned}
        \max_{i,j,k}\{|\partial_i\phi_t|,|\partial_{ij}\phi_t|,|\partial_{ijk}\phi_t|\}
        &\leq \int Q_t(x-y)p_t(x-y)\phi_0dy\\
        &\leq \Big(\int Q^q_t(x-y)p_t(x-y)\phi_0dy\Big)^{\frac{1}{q}}\Big(\int p_t(x-y)\phi_0dy\Big)^{\frac{1}{p}}\\
        &=\Big(\int Q^q_t(x-y)p_t(x-y)\phi_0dy\Big)^{\frac{1}{q}}\phi_t^{\frac{1}{p}},
    \end{aligned}
\end{equation}
where in the second inequality we used the Cauchy-Schwarz inequality.
Thus  we have
\begin{equation}
    \max_{i,j,k}\{\frac{|\partial_i\phi_t|}{\phi_t^{\frac{1}{p}}},\frac{|\partial_{ij}\phi_t|}{\phi_t^{\frac{1}{p}}},\frac{|\partial_{ijk}\phi_t|}{\phi_t^{\frac{1}{p}}}\}\leq \Big(\int Q^q_t(x-y)p_t(x-y)\phi_0dy\Big)^{\frac{1}{q}},
\end{equation}
for any $t>0,p>1$ and $q$ satisfies $p^{-1}+q^{-1}=1$. \revised{The last inequality in \eqref{eq:ieee11} follows from the estimate $|Q(x)p_t(x)| \leq C_{t,\epsilon,Q} p_{t+\epsilon}(x)$, which holds for any polynomial $Q$ because the exponential decay of the heat kernel dominates any polynomial growth.Here $C_{t,\epsilon,Q} > 0$ is a constant depending on $t$, $\epsilon~(>0)$, and the polynomial $Q$.

}

\end{proof}

In order to simplify the upper bound in \eqref{eq:ieee11}, we introduce the following lemma, which establishes the necessary integral estimates.
\begin{lemma}\label{lem:10}
    We have
    \revised{\begin{equation}
        \int \phi_t^q(x)dx<\infty,
    \end{equation}
    for any $t\geq 0$ and $q\geq 1$.}
\end{lemma}
\begin{proof}
This is the classical \(L^q-L^1\) estimate for the heat
semigroup. Indeed, by Young's inequality,
\[
\|\phi_t\|_q
=
\|p_t*\phi_0\|_q
\le
\|p_t\|_q\|\phi_0\|_1.
\]
Therefore,
\begin{equation}
\|\phi_t\|_q^q
\le
\|p_t\|_q^q\|\phi_0\|_1^q
=
K_{d,q}t^{-\frac{d(q-1)}{2}},
\qquad
K_{d,q}:=q^{-d/2}(4\pi)^{-\frac{d(q-1)}{2}}.
\end{equation}
 
\end{proof}

By combining the results of the two preceding lemmas, we establish the following result:
\begin{lemma}\label{lem:ieee5}
    We have 
\begin{equation}\label{eq:123}
    \begin{aligned}
        \int |\partial_{i}{u_t}|^pdx<\infty,
        \int |\partial_{ij}{u_t}|^pdx<\infty,
        \int |\partial_{ijk}{u_t}|^pdx<\infty,
    \end{aligned}
\end{equation}
for any $p\geq 1+\delta,t>0$; and 
\begin{equation}\label{eq:121}
    \begin{aligned}
        \int |\partial_{i}\sqrt{u_t}|^pdx<\infty,
        \int |\partial_{ij}\sqrt{u_t}|^pdx<\infty,
        \int |\partial_{ijk}\sqrt{u_t}|^pdx<\infty,
    \end{aligned}
\end{equation}
for any $p\geq 2(1+\delta),t>0$.
\end{lemma}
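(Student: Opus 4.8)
The plan is to reduce both statements to a single scaling claim about powers of $\phi_t$ and then exploit the boundedness of $\phi_t$ for $t>0$. Since $u_t=\phi_t^{1/(1+\delta)}$ we have $\sqrt{u_t}=\phi_t^{1/(2(1+\delta))}$, so \eqref{eq:123} and \eqref{eq:121} are the two instances (with $\theta=\tfrac1{1+\delta}$, resp.\ $\theta=\tfrac1{2(1+\delta)}$, for which $1/\theta=1+\delta$, resp.\ $1/\theta=2(1+\delta)$) of the claim: \emph{for every $\theta>0$, $t>0$ and $p\ge 1/\theta$, every partial derivative of $\phi_t^{\theta}$ of order $m\in\{1,2,3\}$ lies in $L^{p}(\mathbb{R}^{d})$.} Writing $\partial^{(m)}$ for a generic derivative of order $m$, the multivariate Fa\`a di Bruno formula expresses $\partial^{(m)}(\phi_t^{\theta})$ as a finite linear combination of terms
\[
\phi_t^{\theta-j}\,\big(\partial^{(m_1)}\phi_t\big)\cdots\big(\partial^{(m_j)}\phi_t\big),\qquad 1\le j\le m,\quad m_1+\cdots+m_j=m,\quad m_k\ge1,
\]
so it suffices to bound the $L^{p}$ norm of one such term.

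First I would remove the derivatives of $\phi_t$ using Lemma~\ref{lem:9}: for each factor pick a conjugate pair $(a_k,a_k')$ with $a_k\in(1,\infty)$, so that $|\partial^{(m_k)}\phi_t(x)|\le\phi_t(x)^{1/a_k}G_t^{(k)}(x)$ with $G_t^{(k)}(x):=\big(\int Q_t^{a_k'}(x-y)p_t(x-y)\phi_0(y)\,dy\big)^{1/a_k'}$. Multiplying the $j$ factors, the term above is pointwise at most $\phi_t(x)^{E}\prod_{k}G_t^{(k)}(x)$, where $E:=\theta-j+\sum_{k}1/a_k$. I then choose the $a_k$ so that $E\ge0$; since each $1/a_k$ may lie anywhere in $(0,1)$ this is possible for every $j\le m$ exactly because $\theta>0$ (if $\theta\ge j$ any admissible $a_k$ work; if $\theta<j$ one needs only $\sum_{k}1/a_k\ge j-\theta<j$, which is achievable).

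The heart of the argument is a comparison of heat kernels at times $t$ and $2t$. From $p_t(z)/p_{2t}(z)=2^{d/2}e^{-|z|^{2}/(8t)}\le 2^{d/2}$ one gets $\phi_t\le 2^{d/2}\phi_{2t}$ pointwise, and since $z\mapsto Q_t^{a'}(z)e^{-|z|^{2}/(8t)}$ is bounded one gets $Q_t^{a'}(z)p_t(z)\le C_t p_{2t}(z)$, hence $G_t^{(k)}(x)\le C_t^{1/a_k'}\phi_{2t}(x)^{1/a_k'}$. Substituting these and using $\tfrac1{a_k}+\tfrac1{a_k'}=1$, all powers of $\phi_{2t}$ recombine into one:
\[
\phi_t(x)^{E}\prod_{k}G_t^{(k)}(x)\ \le\ C'\,\phi_{2t}(x)^{\,\theta-j+\sum_{k}(1/a_k+1/a_k')}\ =\ C'\,\phi_{2t}(x)^{\theta}.
\]
Hence $\int_{\mathbb{R}^{d}}|\partial^{(m)}(\phi_t^{\theta})|^{p}\,dx\le C\int_{\mathbb{R}^{d}}\phi_{2t}^{\,p\theta}\,dx$, and since $\phi_{2t}=\phi_0*p_{2t}$ is a probability density with $\|\phi_{2t}\|_{L^{\infty}}\le(8\pi t)^{-d/2}$, it belongs to every $L^{s}$, $s\ge1$; the right-hand side is therefore finite as soon as $p\theta\ge1$, i.e.\ $p\ge1/\theta$. (Alternatively, for $p\ge1$ one may keep $\phi_t^{E}\le(4\pi t)^{-dE/2}$ and finish by H\"older's inequality together with Lemma~\ref{lem:10}, applied with outer exponent $jp/a_k'$; this is the most direct way the two preceding lemmas combine.)

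I expect the only genuinely delicate point to be the bookkeeping of the conjugate exponents: one has to verify that for \emph{every} term of the Fa\`a di Bruno expansion the $a_k$ can be chosen so that the power of $\phi_t$ stays nonnegative while the total power of $\phi_{2t}$ (equivalently, the outer exponent fed into Lemma~\ref{lem:10}) stays $\ge1$, and that the admissible range of $p$ is exactly $p\ge1+\delta$, resp.\ $p\ge2(1+\delta)$, and no larger. The structural fact that makes the whole scheme go through is the boundedness of $\phi_t$ for each fixed $t>0$, which renders the possibly singular factor $\phi_t^{\theta-j}$ harmless once the fractional powers $\phi_t^{1/a_k}$ furnished by Lemma~\ref{lem:9} have been borrowed.
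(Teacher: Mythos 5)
Your proposal is correct, and its first half coincides with the paper's proof: the paper likewise expands $\partial_i u_t$, $\partial_{ij}u_t$, $\partial_{ijk}u_t$ by the chain rule into terms of the form $\phi_t^{\theta-j}\prod_k\partial^{(m_k)}\phi_t$ and then invokes Lemma \ref{lem:9} to trade each derivative of $\phi_t$ for a fractional power of $\phi_t$ times a controlled weight. Where you diverge is in the exponent bookkeeping and the finishing step. The paper tunes the conjugate pair in Lemma \ref{lem:9} for each factor so that the powers of $\phi_t$ cancel \emph{exactly} (e.g.\ for a product of $j$ first derivatives it takes $q=j(1+\delta)$, so the bound becomes $\big(\int Q_t^{j(1+\delta)}p_t\phi_0\big)^{1/(1+\delta)}$ with no residual power of $\phi_t$), and then concludes by Lemma \ref{lem:10} with outer exponent $p/(1+\delta)\ge 1$. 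You instead leave a nonnegative residual power $\phi_t^{E}$ and absorb everything into $\phi_{2t}^{\theta}$ via the time-doubling comparisons $\phi_t\le 2^{d/2}\phi_{2t}$ and $Q_t^{a'}p_t\le C_tp_{2t}$, finishing with the elementary fact that $\phi_{2t}\in L^{1}\cap L^{\infty}$. Both are sound; your route bypasses Lemma \ref{lem:10} entirely (though the same heat-kernel trick, $Q^{sp}p_t^{p}\le Cp_{t/p}$, is what powers that lemma's proof), and it yields the cleaner pointwise output $|\partial^{(m)}(\phi_t^{\theta})|\le C\phi_{2t}^{\theta}$, which is in fact slightly stronger than what the lemma asserts. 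Two minor remarks: your worry about showing the range $p\ge 1+\delta$ is ``no larger'' is unnecessary, since the lemma only claims sufficiency; and the choice of the $a_k$ making $E\ge 0$ is indeed always possible for $j\le 3$, exactly as you argue, so there is no hidden obstruction in the bookkeeping.
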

\begin{proof}
Recalling the definition \(u_{t}:=\phi _{t}^{\frac{1}{1+\delta }}\), a direct computation yields:
\begin{equation}
    \begin{aligned}
        \partial_i u_t&=\frac{1}{1+\delta}\frac{\partial_i\phi_t}{\phi_t^{\frac{\delta}{1+\delta}}},\\
        \partial_{ij}{u_t}&=\frac{1}{1+\delta}\frac{\partial_{ij}\phi_t}{\phi^{\frac{\delta}{1+\delta}}_t}-\frac{\delta}{(1+\delta)^2}\frac{\partial_i\phi_t}{\phi_t^{\frac{2\delta+1}{2(1+\delta)}}}\frac{\partial_j\phi_t}{\phi_t^{\frac{2\delta+1}{2(1+\delta)}}},\\
        \partial_{ijk}{u_t}&=\frac{1}{1+\delta}\frac{\partial_{ijk}\phi_t}{\phi_t^{\frac{\delta}{1+\delta}}}-\frac{\delta}{(1+\delta)^2}\frac{\partial_{ij}\phi_t}{\phi_t^{\frac{2\delta+1}{2(1+\delta)}}}\frac{\partial_{k}\phi_t}{\phi_t^{\frac{2\delta+1}{2(1+\delta)}}}\\
        &\quad-\frac{\delta}{(1+\delta)^2}\frac{\partial_{ik}\phi_t}{\phi_t^{\frac{1+2\delta}{2(1+\delta)}}}\frac{\partial_{j}\phi_t}{\phi_t^{\frac{1+2\delta}{2(1+\delta)}}}-\frac{\delta}{(1+\delta)^2}\frac{\partial_{jk}\phi_t}{\phi_t^{\frac{1+2\delta}{2(1+\delta)}}}\frac{\partial_{i}\phi_t}{\phi_t^{\frac{1+2\delta}{2(1+\delta)}}}\\
        &\quad+\frac{\delta(1+2\delta)}{(1+\delta)^3}\frac{\partial_{i}\phi_t}{\phi_t^{\frac{2+3\delta}{3(1+\delta)}}}\frac{\partial_{j}\phi_t}{\phi_t^{\frac{2+3\delta}{3(1+\delta)}}}\frac{\partial_{k}\phi_t}{\phi_t^{\frac{2+3\delta}{3(1+\delta)}}},
    \end{aligned}
\end{equation}
thus with Lemma \ref{lem:9}, we have
\begin{equation}\label{eq:117}
    \begin{aligned}
        |\partial_i u_t|
        &\revised{\leq C(\delta,t)\phi_{2t}^{\frac{1}{1+\delta}}}\\
        |\partial_{ij}{u_t}|&\revised{\leq C(\delta,t)\phi_{2t}^{\frac{1}
        {1+\delta}}}\\
        |\partial_{ijk}u_t|
        &\revised{\leq C(\delta,t)\phi_{2t}^{\frac{1}{1+\delta}}},
    \end{aligned}
\end{equation}
with Lemma \ref{lem:10}, we further have
\begin{equation}
    \begin{aligned}
        \int |\partial_{i}{u_t}|^pdx<\infty,
        \int |\partial_{ij}{u_t}|^pdx<\infty,
        \int |\partial_{ijk}{u_t}|^pdx<\infty,
    \end{aligned}
\end{equation}
for any $p\geq 1+\delta$ and $t>0$.

The second part is similar.
Since $\sqrt{u_t}:=\phi_t^{\frac{1}{2(1+\delta)}}$, a direct computation yields:
\begin{equation}
    \begin{aligned}
        \partial_i\sqrt{u_t}&=\frac{1}{2(1+\delta)}\frac{\partial_i\phi_t}{\phi_t^{\frac{1+2\delta}{2(1+\delta)}}},\\
        \partial_{ij}\sqrt{u_t}&=\frac{1}{2(1+\delta)}\frac{\partial_{ij}\phi_t}{\phi^{\frac{1+2\delta}{2(1+\delta)}}_t}-\frac{1+2\delta}{4(1+\delta)^2}\frac{\partial_i\phi_t}{\phi_t^{\frac{3+4\delta}{4(1+\delta)}}}\frac{\partial_j\phi_t}{\phi_t^{\frac{3+4\delta}{4(1+\delta)}}},\\
        \partial_{ijk}\sqrt{u_t}&=\frac{1}{2(1+\delta)}\frac{\partial_{ijk}\phi_t}{\phi_t^{\frac{1+2\delta}{2(1+\delta)}}}-\frac{1+2\delta}{4(1+\delta)^2}\frac{\partial_{ij}\phi_t}{\phi_t^{\frac{3+4\delta}{4(1+\delta)}}}\frac{\partial_{k}\phi_t}{\phi_t^{\frac{3+4\delta}{4(1+\delta)}}}\\
        &\quad-\frac{1+2\delta}{4(1+\delta)^2}\frac{\partial_{ik}\phi_t}{\phi_t^{\frac{3+4\delta}{4(1+\delta)}}}\frac{\partial_{j}\phi_t}{\phi_t^{\frac{3+4\delta}{4(1+\delta)}}}-\frac{1+2\delta}{4(1+\delta)^2}\frac{\partial_{jk}\phi_t}{\phi_t^{\frac{3+4\delta}{4(1+\delta)}}}\frac{\partial_{i}\phi_t}{\phi_t^{\frac{3+4\delta}{4(1+\delta)}}}\\
        &\quad+\frac{(1+2\delta)(3+4\delta)}{8(1+\delta)^3}\frac{\partial_{i}\phi_t}{\phi_t^{\frac{5+6\delta}{6(1+\delta)}}}\frac{\partial_{j}\phi_t}{\phi_t^{\frac{5+6\delta}{6(1+\delta)}}}\frac{\partial_{k}\phi_t}{\phi_t^{\frac{5+6\delta}{6(1+\delta)}}},
    \end{aligned}
\end{equation}
thus with Lemma \ref{lem:9}, we have
\begin{equation}\label{eqq:117}
    \begin{aligned}
        |\partial_{i}\sqrt{u_t}|
        &\revised{\leq C(\delta,t)\phi_{2t}^{\frac{1}
        {2(1+\delta)}}},\\
        |\partial_{ij}\sqrt{u_t}|
        &\revised{\leq C(\delta,t)\phi_{2t}^{\frac{1}
        {2(1+\delta)}}}\\
        |\partial_{ijk}\sqrt{u_t}|
        &\revised{\leq C(\delta,t)\phi_{2t}^{\frac{1}
        {2(1+\delta)}}},
    \end{aligned}
\end{equation}
with Lemma \ref{lem:10}, we further have
\begin{equation}
    \begin{aligned}
        \int |\partial_{i}\sqrt{u_t}|^pdx<\infty,
        \int |\partial_{ij}\sqrt{u_t}|^pdx<\infty,
        \int |\partial_{ijk}\sqrt{u_t}|^pdx<\infty,
    \end{aligned}
\end{equation}
for any $p\geq 2(1+\delta)$ and $t>0$.
\end{proof}

Equipped with these preliminary bounds, we are positioned to verify the vanishing of boundary integrals, thereby ensuring the validity of the integration-by-parts operations. We have the following proposition.
\begin{proposition}\label{prop:11}
    \revised{For $d=1$, we take $\delta \in (-1,\infty)$, while for $d \ge 2$, we take $\delta \in (-1,1]$. Under these assumptions,} we have
    \begin{equation}
        \int u_t\Delta u_tdx=-\int\normsq{\nabla u_t}dx,\int \Delta \normsq{\nabla u_t}dx=0,\int \inner{\nabla\Delta u_t}{\nabla u_t}dx=-\int\normsq{\Delta u_t}dx,
    \end{equation}
    and
    \begin{equation}
        \int \Delta u_t\normsq{\nabla\sqrt{u_t}}dx=-\int \inner{\nabla u_t}{\nabla\normsq{\nabla\sqrt{u_t}}}dx=\int u_t\Delta\normsq{\nabla \sqrt{u_t}}dx.
    \end{equation}
\end{proposition}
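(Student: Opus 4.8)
The plan is to establish each identity by the same two-ingredient recipe: (i) approximate $\mathbb{R}^d$ by balls $B_R$ and apply the divergence theorem on $B_R$, producing an interior integral plus a boundary integral over $\partial B_R$; (ii) show that a subsequence of radii $R_n\to\infty$ can be chosen along which the boundary integral vanishes, using the global integrability bounds of Lemma~\ref{lem:ieee5}. The key observation making (ii) work is the elementary fact that if a nonnegative function $g\in L^1(\mathbb{R}^d)$, then $\liminf_{R\to\infty} R^{d-1}\!\!\int_{\partial B_R} g\, d\sigma = 0$ (otherwise $g\gtrsim 1/R$ on a positive-measure cone would contradict $\int_{B_R^c} g<\infty$); equivalently, writing $\int_{\mathbb{R}^d} g = \int_0^\infty\!\big(\int_{\partial B_R} g\,d\sigma\big)dR$, the integrand cannot be bounded below by $c/R$ near infinity. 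So for each identity I will identify the boundary integrand, dominate it by an $L^1$ function built from the products of derivatives of $u_t$ (or $\sqrt{u_t}$) controlled in Lemma~\ref{lem:ieee5}, and pass to the limit along the good subsequence.

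Concretely: for $\int u_t\Delta u_t\,dx = -\int\normsq{\nabla u_t}\,dx$, the boundary term is $\int_{\partial B_R} u_t\,\partial_\nu u_t\,d\sigma$; I bound $|u_t\,\nabla u_t|$ by a Cauchy--Schwarz/Young split into $|u_t|^{p}$ and $|\nabla u_t|^{p'}$ with exponents $\ge 1+\delta$ (note $u_t=\phi_t^{1/(1+\delta)}$ and $\norm{\phi_t}_{L^1}=1$ give $u_t\in L^{1+\delta}$, while $\nabla u_t\in L^p$ for $p\ge 1+\delta$ by Lemma~\ref{lem:ieee5}), so $u_t|\nabla u_t|\in L^1$ and the subsequence argument applies. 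For $\int\Delta\normsq{\nabla u_t}\,dx=0$, the boundary integrand is $\partial_\nu\normsq{\nabla u_t}$, pointwise $\lesssim |\nabla u_t||\nabla^2 u_t|$, which lies in $L^1$ by Hölder using $\nabla u_t,\nabla^2 u_t\in L^{2(1+\delta)}\supseteq$ (since $2(1+\delta)\ge 1+\delta$ and these also lie in $L^p$ for all larger $p$ up to where Lemma~\ref{lem:ieee5} allows — here one checks $L^{2}$ membership of each factor suffices, which holds as $2\ge 1+\delta$ for $\delta\le 1$). For $\int\inner{\nabla\Delta u_t}{\nabla u_t}\,dx=-\int\normsq{\Delta u_t}\,dx$ I integrate by parts moving one derivative off $\nabla\Delta u_t$; the boundary term involves $\Delta u_t\,\partial_\nu u_t\lesssim |\nabla^2 u_t||\nabla u_t|\in L^1$ again by Lemma~\ref{lem:ieee5}. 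The second display is identical with $u_t$ replaced by $\sqrt{u_t}$ in the gradient factors and the relevant exponents raised to $2(1+\delta)$: the two boundary terms are $\int_{\partial B_R}\partial_\nu u_t\,\normsq{\nabla\sqrt{u_t}}\,d\sigma$ and $\int_{\partial B_R} u_t\,\partial_\nu\normsq{\nabla\sqrt{u_t}}\,d\sigma$, dominated respectively by $|\nabla u_t|\,\normsq{\nabla\sqrt{u_t}}$ and $u_t\,|\nabla\sqrt{u_t}|\,|\nabla^2\sqrt{u_t}|$, both $L^1$ by Hölder with the exponents in \eqref{eq:121} together with $u_t\in L^{1+\delta}$ and the pointwise identity $\normsq{\nabla u_t}=4u_t\normsq{\nabla\sqrt{u_t}}$, which lets one trade a factor $u_t$ for two factors $\sqrt{u_t}$ as needed.

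The step I expect to be the main obstacle is the bookkeeping of exponents: one must verify, for each product of derivative-factors appearing in a boundary integrand, that there exist Hölder-conjugate exponents all simultaneously $\ge 1+\delta$ (resp. $\ge 2(1+\delta)$) so that every factor is covered by Lemma~\ref{lem:ieee5} and by $u_t\in L^{1+\delta}$; since $\delta\in(-1,1]$ the worst case is $\delta=1$, where $1+\delta=2$ and one needs each of two (or three) factors in $L^2$ (or $L^3$), and one should confirm that the derivative bounds \eqref{eq:123}--\eqref{eq:121} indeed extend to all exponents $p$ above the stated threshold — which they do, because the right-hand sides of \eqref{eq:117} are finite in every $L^p$, $p\ge 1$, by Lemma~\ref{lem:10}. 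A secondary point to handle carefully is that the good subsequence $R_n$ may a priori depend on the identity; since there are finitely many identities (and finitely many $L^1$ dominating functions), one intersects the subsequences, or simply runs the argument once per identity, which is harmless.
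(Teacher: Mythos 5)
Your proof is correct, but it takes a genuinely different route from the paper's. The paper also reduces each identity to the vanishing of boundary terms, but it works on cubes $Q_R(0)=[-R,R]^d$ and proves the \emph{full} limit of each boundary integral is zero: it tiles the annulus $Q_R(0)\setminus Q_{R-1}(0)$ by unit half-cubes, applies the trace inequality $\int_{\partial Q_{1/2}(y)}|f|^p\,dS\le C\int_{Q_{1/2}(y)}|f|^p+\norm{\nabla f}^p\,dx$ on each tile, and concludes because the resulting annulus integrals are tails of convergent integrals (this is precisely why Lemma \ref{lem:ieee5} tracks third derivatives: the trace inequality costs one derivative). You instead use the coarea formula to extract a subsequence $R_n\to\infty$ along which the sphere integrals of the $L^1$ dominating functions vanish; this is more elementary (no trace theorem, no tiling), needs no extra derivative for the vanishing step, and suffices because the interior integrals converge along the full sequence by dominated convergence. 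Your dominating functions are essentially the same ones the paper uses ($u_t^2$, $\normsq{\nabla u_t}$, $\normsq{\nabla^2 u_t}$, $\norm{\nabla\sqrt{u_t}}^4$, $\norm{\nabla^2\sqrt{u_t}}^4$, all in $L^1$ since $2\ge 1+\delta$, $4\ge 2(1+\delta)$, and $u_t^2=\phi_t^{2/(1+\delta)}\in L^1$ by the heat-kernel $L^p$ bound), so the exponent bookkeeping you flag does close. Two small blemishes, neither fatal: the auxiliary fact should be $\liminf_{R\to\infty}R\int_{\partial B_R}g\,d\sigma=0$ rather than the version with $R^{d-1}$ (which is false for $d\ge 3$; your own coarea justification proves, and the argument only needs, $\liminf_{R}\int_{\partial B_R}g\,d\sigma=0$), and for $\delta\le 0$ the exponent $1+\delta\le 1$ has no useful H\"older conjugate, so one should split $u_t|\nabla u_t|\le\tfrac12(u_t^2+\normsq{\nabla u_t})$ rather than pair $L^{1+\delta}$ with its conjugate. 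Note that the remark following Proposition \ref{prop:11} asserts one must invoke the trace theorem to justify discarding the boundary terms; your subsequence argument shows it is not the only rigorous option.
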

\begin{proof}
\revised{We first consider $\delta\in (-1,1],d\geq 1$.}   Applying integration by parts over the cube \(Q_{R}(0):=[-R,R]^{d}\), we must demonstrate that the resulting boundary integrals vanish in the limit \(R\rightarrow \infty \). Specifically, we are required to verify that:
    \begin{equation}\label{eqq:31}\footnotesize
        \begin{aligned}
         &\lim_{R\to\infty}\int_{\partial Q_R(0)}\norm{\frac{\partial u_t}{\partial n}}u_tdS\leq\lim_{R\to\infty}\Big[\int_{\partial Q_R(0)}\normsq{\nabla u_t}dS+\int_{\partial Q_R(0)}u_t^2dS\Big]\to 0,\\
         &\lim_{R\to\infty}\int_{\partial Q_R(0)}\norm{\frac{\partial \normsq{\nabla u_t}}{\partial n}}dS\leq\lim_{R\to\infty}\Big[\int_{\partial Q_R(0)}\normsq{\nabla u_t}dS+\int_{\partial Q_R(0)}\normsq{\nabla^2u_t}dS\Big]\to 0,\\
          &\lim_{R\to\infty}\int_{\partial Q_R(0)}\norm{\frac{\partial u_t}{\partial n}\Delta u_t}dS\leq\lim_{R\to\infty}\Big[\int_{\partial Q_R(0)}\normsq{\nabla u_t}dS+\int_{\partial Q_R(0)}\normsq{\nabla^2u_t}dS\Big]\to 0,\\
            &\lim_{R\to\infty}\int_{\partial Q_R(0)}\norm{\frac{\partial u_t}{\partial n}}\normsq{\nabla\sqrt{u_t}}dS\leq\lim_{R\to\infty}\Big[\int_{\partial Q_R(0)}\normsq{\nabla u_t}dS+\int_{\partial Q_R(0)}\norm{\nabla\sqrt{u_t}}^4dS\Big]\to 0,\\
            &\lim_{R\to\infty}\int_{\partial Q_R(0)}u_t\norm{\frac{\partial \normsq{\nabla\sqrt{u_t}}}{\partial n}}dS  \leq \lim_{R\to\infty}\Big[\int_{\partial Q_R(0)}u_t^2dS +\int_{\partial Q_R(0)}\norm{\nabla^2\sqrt{u_t}}^4dS+\int_{\partial Q_R(0)}\norm{\nabla\sqrt{u_t}}^4dS   \Big]  \to 0,\\
        \end{aligned}
    \end{equation}
    where we have employed the Cauchy-Schwarz inequality, and \(n\) denotes the unit outward normal vector to the boundary \(\partial Q_{R}(0)\).
    
    By the trace theorem~(see \cite[Section 5.5]{evans2022partial}), there exists a constant \(C>0\), independent of both \(y\) and \(f\), such that for any \(f\in W^{1,p}(Q_{1/2}(y))\) with \(p\ge 1\), we have
    \begin{equation}\label{eq:ieee34}
    	\int_{\partial Q_{\frac{1}{2}}(y)}\norm{f}^pdS\leq C\int_{Q_{\frac{1}{2}}(y)}\norm{f}^p+\norm{\nabla f}^pdx,
    \end{equation}
    here $Q_{\frac{1}{2}}(y)=y+[-1/2,1/2]^d$. Consider a tiling of the annular region \(Q_{R}(0)\setminus Q_{R-1}(0)\) by a finite collection of essentially disjoint cubes \(\{Q_{1/2}(y_{i})\}_{i}\), such that \(Q_{R}(0)\setminus Q_{R-1}(0)=\bigcup _{i}Q_{1/2}(y_{i})\) and \(\text{int}(Q_{1/2}(y_{i}))\cap \text{int}(Q_{1/2}(y_{j}))=\emptyset \) for \(i\ne j\). Utilizing this partition and the trace inequality~\eqref{eq:ieee34}, we obtain:
    \begin{equation}\label{eq:128}\footnotesize
    	\begin{aligned}
    	\int_{\partial Q_R(0)}{u_t^2}+\normsq{\nabla u_t}+\normsq{\nabla^2u_t}dS&\leq \sum_i \int_{\partial Q_{\frac{1}{2}}(y_i)}{u_t^2}+\normsq{\nabla u_t}+\normsq{\nabla^2u_t}dS\\
    	&\leq 3C\int_{Q_R(0)\setminus Q_{R-1}(0)} u_t^2+\normsq{\nabla u_t}+\normsq{\nabla^2 u_t}+\normsq{\nabla^3u_t}dx\\
    	\int_{\partial Q_R(0)}\norm{\nabla \sqrt{u_t}}^4+\norm{\nabla^2 \sqrt{u_t}}^4dS&\leq \sum_i \int_{\partial Q_{\frac{1}{2}}(y_i)}\norm{\nabla \sqrt{u_t}}^4+\norm{\nabla^2 \sqrt{u_t}}^4dS\\
    	&\leq 2C\int_{Q_R(0)\setminus Q_{R-1}(0)}\norm{\nabla \sqrt{u_t}}^4+\norm{\nabla^2 \sqrt{u_t}}^4+\norm{\nabla^3 \sqrt{u_t}}^4dx.
    	\end{aligned}
    \end{equation}
    Due to \eqref{eq:123}, \eqref{eq:121}, $\delta\leq 1$ and the $L^p-L^1$ estimates for the heat equation, that is $\norm{\phi_t}_{L^p}\leq Ct^{-\frac{d}{2}(1-\frac{1}{p})}\norm{\phi_0}_{L^1}$, for $p\geq 1$, we have
    \begin{equation}
    	\begin{aligned}
    		&\int u_t^2+\normsq{\nabla u_t}+\normsq{\nabla^2 u_t}+\normsq{\nabla^3u_t}dx<\infty,\\
    		&\int\norm{\nabla \sqrt{u_t}}^4+\norm{\nabla^2 \sqrt{u_t}}^4+\norm{\nabla^3 \sqrt{u_t}}^4dx<\infty,
    	\end{aligned}
    \end{equation}
    thus the right hand sides of \eqref{eq:128} vanish as $R\to\infty$, and we finished the proof.

    \revised{When $d=1$, there is a more direct argument to show that the
boundary term on the right-hand side of \eqref{eqq:31} vanishes 
as $R \to \infty$, for any $\delta\in (-1,\infty)$. In one dimension, the boundary integral 
reduces to the sum of the integrand evaluated at the endpoints 
$x=R$ and $x=-R$. By \eqref{eq:117} and \eqref{eqq:117}, it 
therefore suffices to verify that $\phi_t(x) \to 0$ as $|x| \to \infty$, for any $t \ge 0$.

Fix $\varepsilon > 0$. Since $\phi_0 \in L^1(\mathbb{R})$, there exists $R>0$ such that
\[
\int_{|y|>R} |\phi_0(y)|\,dy < \varepsilon.
\]
Split
\[
\phi_t(x) = \int_{|y|\le R} p_t(x-y)\phi_0(y)\,dy + \int_{|y|>R} p_t(x-y)\phi_0(y)\,dy
=: I_1(x) + I_2(x).
\]

For the tail term,
\[
|I_2(x)| \le \|p_t\|_\infty \int_{|y|>R} |\phi_0(y)|\,dy \le C_t \varepsilon.
\]

For the compact part, if $|y|\le R$, then $|x-y|\ge |x|-R$, hence
\[
p_t(x-y) \le \frac{1}{\sqrt{4\pi t}} e^{-\frac{(|x|-R)^2}{4t}}.
\]
Thus,
\[
|I_1(x)| \le \left(\int_{|y|\le R} |\phi_0(y)|\,dy\right) \frac{1}{\sqrt{4\pi t}} e^{-\frac{(|x|-R)^2}{4t}} \to 0
\quad \text{as } |x|\to\infty.
\]

Combining the estimates,
\[
\limsup_{|x|\to\infty} |\phi_t(x)| \le C_t \varepsilon.
\]
Since $\varepsilon>0$ is arbitrary, we conclude
\[
\phi_t(x) \to 0 \quad \text{as } |x|\to\infty.
\]}
\end{proof}

It is important to recognize that the integrability condition $\int |f|dx<\infty $ does not, by itself, guarantee that the boundary integral \(\int _{\partial Q_{R}(0)}|f|dS\) vanishes in the limit \(R\rightarrow \infty \). To properly justify the neglect of these boundary terms, one may instead invoke the trace theorem within appropriate weighted Sobolev spaces.

\revised{

When $d \ge 2$, although $\phi_t(x) \to 0$ as $\|x\| \to \infty$, the boundary integral
\[
\int_{\partial B_R(0)} \phi_t^p \, dS
\]
does not necessarily vanish as $R \to \infty$ for $p \in (0,1)$.
For simplicity, we use $\partial B_R$ instead of $Q_R$, here $B_R$ is the ball of
radius $R$ centered at the origin. Consequently, 
in dimensions $d \ge 2$, the method used in this work only allows 
verification of integration by parts for $\delta \in (-1,1]$, but 
fails for $\delta > 1$~(which corresponds to $q\in (0,1)$).
A counterexample is
\[
\phi_0(x) = \frac{C}{(1+\|x\|)^{d+1}} \in L^1(\mathbb{R}^d).
\]
Then, for $\|x\| = R$, we have
\begin{equation}
\begin{aligned}
\phi_t(x) &= \int_{\|y-x\| \le 1} p_t(x-y) \phi_0(y) \, dy 
           + \int_{\|y-x\| > 1} p_t(x-y) \phi_0(y) \, dy \\
&\ge \int_{\|y-x\| \le 1} p_t(x-y) \phi_0(y) \, dy \\
&\gtrsim \frac{1}{R^{d+1}}.
\end{aligned}
\end{equation}
Hence, the boundary integral satisfies
\[
\int_{\partial B_R} \phi_t^p \, dS \gtrsim \int_{\partial B_R} R^{-p(d+1)} \, dS \gtrsim R^{d-1 - p(d+1)} \to \infty, \quad \text{if } p < \frac{d-1}{d+1}.
\]

}

\section{Estimation of the Second-Order Temporal Derivative of \(\int u_t^2dx\)}\label{sec:3}
Throughout this section we write \(u=u_t\) when no confusion can arise and set
\begin{equation}\label{eq:def-DABTG}
D:=\int(\Delta u)^2dx,\qquad
A:=\int \Delta u\frac{\|\nabla u\|^2}{u}dx,\qquad
B:=\int\frac{\|\nabla u\|^4}{u^2}dx,
\end{equation}
\begin{equation}
T:=\int\frac{\nabla^2u(\nabla u,\nabla u)}{u}dx,
\qquad
G:=\int\|\nabla u\|^2dx.
\end{equation}
All integrations are over \(\mathbb{R}^d\).  The integration-by-parts identities below are justified by Proposition \ref{prop:11}; in particular, for \(d\ge2\) we use them in the range \(\delta\in(-1,1]\), while in dimension one they are justified for \(\delta\in(-1,\infty)\).

Recall that
\[
q=\frac{2}{1+\delta},\qquad u_t=\phi_t^{1/(1+\delta)}=\phi_t^{q/2},
\]
and that Lemma 8 gives
\begin{equation}\label{eq:nonlinear-ut-section4}
\partial_tu_t=\Delta u_t+\delta\frac{\|\nabla u_t\|^2}{u_t}.
\end{equation}
We first compute the first derivative:
\begin{equation}\label{eq:first-U-derivative}
\begin{aligned}
\frac{d}{dt}\frac12\int u_t^2dx
&=\int u_t\partial_tu_tdx \\
&=\int u_t\Delta u_tdx+\delta\int\|\nabla u_t\|^2dx \\
&=-(1-\delta)\int\|\nabla u_t\|^2dx.
\end{aligned}
\end{equation}
Since \(\int u_t^2dx=\int\phi_t^qdx\) and \(q-1=(1-\delta)/(1+\delta)\), this is equivalent to the generalized de Bruijn identity
\begin{equation}\label{eq:debruijn-section4}
\frac{d}{dt}S_q(\phi_t)=2(1+\delta)\int\|\nabla u_t\|^2dx
=\frac{4}{q}\int\|\nabla\phi_t^{q/2}\|^2dx.
\end{equation}
For the second derivative, using \eqref{eq:nonlinear-ut-section4} again,
\begin{equation}\label{eq:51}
\begin{aligned}
\frac{d}{dt}\frac12\int\|\nabla u_t\|^2dx
&=\int\langle\nabla u_t,\nabla\partial_tu_t\rangle dx \\
&=-\int\Delta u_t\left(\Delta u_t+\delta\frac{\|\nabla u_t\|^2}{u_t}\right)dx \\
&=-D-\delta A.
\end{aligned}
\end{equation}
Therefore
\begin{equation}\label{eq:S-second-D-A}
\frac{d^2}{dt^2}S_q(\phi_t)=-4(1+\delta)(D+\delta A).
\end{equation}
Thus the concavity of \(S_q(\phi_t)\) is reduced to proving
\begin{equation}\label{eq:concavity-reduction}
D+\delta A\ge0.
\end{equation}
Equivalently, if one has constants \(C_l,C_u\ge0\) such that
\begin{equation}\label{eq:74}
-C_l\int(\Delta u)^2dx
\le
\int\Delta u\frac{\|\nabla u\|^2}{u}dx
\le
C_u\int(\Delta u)^2dx,
\end{equation}
then \(D+\delta A\ge0\) for \(\delta\in[-1/C_u,1/C_l]\).  The main point of the present section is that the upper constant can be taken to be the sharp dimension-free value
\begin{equation}\label{eq:sharp-cu-three}
C_u=3.
\end{equation}

\begin{proposition}[lower bound]\label{lem:hessian-identities}
Under the assumptions of Proposition \ref{prop:11},
\begin{equation}\label{eq:hessian-laplace}
\int\|\nabla^2u\|^2dx=\int(\Delta u)^2dx=D,
\end{equation}
\begin{equation}\label{eq:A-B-T}
A=B-2T,
\end{equation}
and
\begin{equation}\label{eq:lower-square-identity}
D+A=
\int\left\|\nabla^2u-\frac{\nabla u\otimes\nabla u}{u}\right\|^2dx
=
\int u^2\|\nabla^2\log u\|^2dx\ge0.
\end{equation}
Consequently,
\begin{equation}\label{eq:lower-Cl-one}
A\ge -D.
\end{equation}
\end{proposition}
\begin{remark}
    Identity \eqref{eq:lower-square-identity} is essentially a reformulation of an identity already appearing in \cite{villani2006short}.
\end{remark}
\begin{proof}
Equality \eqref{eq:hessian-laplace} follows from Proposition \ref{prop:11}. More precisely, by the Bochner formula
\[
\frac{1}{2}\Delta\normsq{\nabla u}=
\normsq{\nabla^2 u}
+
\inner{\nabla\Delta u}{\nabla u},
\]
together with the integration-by-parts identities established in Proposition \ref{prop:11}, we obtain the desired identity.

Next,
\begin{equation}
\begin{aligned}
A
&=\int\Delta u\frac{\|\nabla u\|^2}{u}dx \\
&=-\int\left\langle\nabla u,\nabla\left(\frac{\|\nabla u\|^2}{u}\right)\right\rangle dx \\
&=-2\int\frac{\nabla^2u(\nabla u,\nabla u)}{u}dx
+\int\frac{\|\nabla u\|^4}{u^2}dx \\
&=B-2T.
\end{aligned}
\end{equation}
Finally, using \eqref{eq:hessian-laplace} and \eqref{eq:A-B-T},
\begin{equation}
\begin{aligned}
D+A
&=\int\|\nabla^2u\|^2dx
-2\int\frac{\nabla^2u(\nabla u,\nabla u)}{u}dx
+\int\frac{\|\nabla u\|^4}{u^2}dx \\
&=\int\left\|\nabla^2u-\frac{\nabla u\otimes\nabla u}{u}\right\|^2dx.
\end{aligned}
\end{equation}
Since
\[
\nabla^2u-\frac{\nabla u\otimes\nabla u}{u}=u\nabla^2\log u,
\]
we obtain \eqref{eq:lower-square-identity}.  The lower bound \eqref{eq:lower-Cl-one} follows immediately.
\end{proof}

\begin{proposition}[Sharp upper bound]\label{lem:sharp-upper-bound}
Under the assumptions of Proposition \ref{prop:11},
\begin{equation}\label{eq:A-upper-three-D}
\int\Delta u\frac{\|\nabla u\|^2}{u}dx
\le
3\int(\Delta u)^2dx.
\end{equation}
More precisely, the estimate follows from the exact identity
\begin{equation}\label{eq:Cu3-exact-SOS}
\begin{aligned}
3D-A
&=\int\left[
2\left(\Delta u-\frac{\|\nabla u\|^2}{3u}\right)^2
+
\left\|\nabla^2u-\frac{\nabla u\otimes\nabla u}{3u}\right\|^2
\right]dx.
\end{aligned}
\end{equation}
The constant \(3\) is sharp in every dimension.
\end{proposition}

\begin{proof}
We prove the upper bound by an explicit square-completion identity.  Define the symmetric matrices and their traces
\begin{equation}\label{eq:PQab-definition}
P:=\nabla^2u,
\qquad
Q:=\frac{\nabla u\otimes\nabla u}{u},
\qquad
a:=\operatorname{tr}P=\Delta u,
\qquad
b:=\operatorname{tr}Q=\frac{\|\nabla u\|^2}{u}.
\end{equation}
Since \(Q\) is rank one, \(\|Q\|^2=b^2\).  Moreover
\begin{equation}\label{eq:PQ-T-density}
P:Q=\frac{\nabla^2u(\nabla u,\nabla u)}{u},
\end{equation}
where \(:\) denotes the Frobenius inner product.  We claim the following pointwise algebraic identity:
\begin{equation}\label{eq:pointwise-SOS-Cu3}
\begin{aligned}
&2\left(a-\frac b3\right)^2+
\left\|P-\frac Q3\right\|^2 \\
&\quad =
3a^2-ab+
\left(\|P\|^2-a^2\right)
+\frac13\left(b^2-ab-2P:Q\right).
\end{aligned}
\end{equation}
Indeed, expanding the left-hand side gives
\[
2a^2-\frac43ab+\frac29b^2
+
\|P\|^2-\frac23P:Q+\frac19\|Q\|^2,
\]
and using \(\|Q\|^2=b^2\) gives the right-hand side of \eqref{eq:pointwise-SOS-Cu3}.

We now integrate \eqref{eq:pointwise-SOS-Cu3}.  Equality \eqref{eq:hessian-laplace} gives
\begin{equation}\label{eq:null-lagrangian-one}
\int\left(\|P\|^2-a^2\right)dx
=
\int\|\nabla^2u\|^2dx-
\int(\Delta u)^2dx
=0.
\end{equation}
Equality \eqref{eq:A-B-T} gives
\begin{equation}\label{eq:null-lagrangian-two}
\int\left(b^2-ab-2P:Q\right)dx
=
B-A-2T
=0.
\end{equation}
  Therefore, after integration, \eqref{eq:pointwise-SOS-Cu3} becomes exactly
\begin{equation}\label{eq:SOS-integrated-Cu3}
3D-A
=\int\left[
2\left(\Delta u-\frac{\|\nabla u\|^2}{3u}\right)^2
+
\left\|\nabla^2u-\frac{\nabla u\otimes\nabla u}{3u}\right\|^2
\right]dx.
\end{equation}
The right-hand side is a sum of squares, hence nonnegative.  This proves
\[
A\le3D,
\]
which is \eqref{eq:A-upper-three-D}.  
In dimension one, \eqref{eq:A-B-T} also gives the useful identity
\begin{equation}\label{eq:one-dimensional-3A-B}
3A=B=\int\frac{(u')^4}{u^2}dx\ge0,
\end{equation}
because in one dimension \(T=A\).  This identity will be used below in the one-dimensional concavity argument.

We now prove that the constant \(3\) is sharp.  Recall the notation
\[
D(u):=\int (\Delta u)^2\,dx,
\qquad
A(u):=\int \Delta u\,\frac{\|\nabla u\|^2}{u}\,dx .
\]
The inequality proved above is
\[
A(u)\le 3D(u).
\]
To show that the constant \(3\) cannot be improved, it suffices to construct
smooth positive admissible functions \(u_\varepsilon\) such that
\[
\frac{A(u_\varepsilon)}{D(u_\varepsilon)}\longrightarrow 3.
\]

We first identify the local extremizing profile.  The identity shows that
equality in \(A(u)= 3D(u)\) would force the square terms to vanish.  Thus,
at least formally and locally, one expects
\begin{equation}\label{eq:equality-equation-Cu3}
\nabla^2u=\frac{\nabla u\otimes\nabla u}{3u},
\qquad
\Delta u=\frac{\|\nabla u\|^2}{3u}.
\end{equation}
In one dimension this reduces to
\begin{equation}\label{eq:one-dimensional-extremal-ode}
u''=\frac{(u')^2}{3u}.
\end{equation}
Let \(w=u^{2/3}\).  Then
\[
w''
=
\frac{2}{3}u^{-1/3}u''
-
\frac{2}{9}u^{-4/3}(u')^2.
\]
Using \eqref{eq:one-dimensional-extremal-ode}, we obtain
\[
w''
=
\frac{2}{3}u^{-1/3}\frac{(u')^2}{3u}
-
\frac{2}{9}u^{-4/3}(u')^2
=0.
\]
Hence \(w\) is affine, and therefore the local positive solutions are of the
form
\[
u(x)=c(x+a)^{3/2}
\]
on intervals where \(x+a>0\), with \(c>0\).  For such profiles,
\[
\frac{(u')^2}{u}=3u'',
\]
and consequently
\begin{equation}\label{eq:pointwise-extremal-ratio}
u''\frac{(u')^2}{u}=3(u'')^2
\end{equation}
pointwise.  Thus the ratio \(A(u)/D(u)\) is locally equal to \(3\).

These local profiles are not globally admissible on \(\mathbb R\).  We therefore
regularize the singularity and attach the profile to a rapidly decaying
background.  Let \(\chi\in C_c^\infty((-2,2))\) satisfy
\(0\le \chi\le 1\) and \(\chi\equiv 1\) on \((-1,1)\).  Define
\[
r_\varepsilon(x):=(x^2+\varepsilon^2)^{3/4}
\]
and
\begin{equation}\label{eq:sharpness-test-function}
u_\varepsilon(x)
=
e^{-x^2}
+
\chi(x)\left(r_\varepsilon(x)-e^{-x^2}\right).
\end{equation}
Then \(u_\varepsilon>0\), \(u_\varepsilon\in C^\infty(\mathbb R)\), and
\(u_\varepsilon=e^{-x^2}\) outside \((-2,2)\).  Hence
\(u_\varepsilon\in L^p(\mathbb R)\) for every \(p>0\), and all boundary terms
vanish in the integrations by parts.

On \(|x|<1\), we have \(u_\varepsilon=r_\varepsilon\).  Set
\[
s=x^2+\varepsilon^2.
\]
A direct calculation gives
\[
r_\varepsilon'(x)
=
\frac{3}{2}x\,s^{-1/4},
\]
and
\[
r_\varepsilon''(x)
=
\frac{3}{4}(x^2+2\varepsilon^2)s^{-5/4}.
\]
Therefore
\[
(r_\varepsilon'')^2
=
\frac{9}{16}
(x^2+2\varepsilon^2)^2
(x^2+\varepsilon^2)^{-5/2},
\]
while
\[
r_\varepsilon''
\frac{(r_\varepsilon')^2}{r_\varepsilon}
=
\frac{27}{16}
x^2(x^2+2\varepsilon^2)
(x^2+\varepsilon^2)^{-5/2}.
\]

On the logarithmic region \(2\varepsilon\le |x|\le 1\), we have
\[
(r_\varepsilon'')^2
=
\frac{9}{16}\frac{1}{|x|}
+
O\!\left(\frac{\varepsilon^2}{|x|^3}\right),
\]
and
\[
r_\varepsilon''
\frac{(r_\varepsilon')^2}{r_\varepsilon}
=
\frac{27}{16}\frac{1}{|x|}
+
O\!\left(\frac{\varepsilon^2}{|x|^3}\right).
\]
Since
\[
\int_{2\varepsilon}^1 \frac{dx}{x}
=
\log(1/\varepsilon)+O(1),
\]
and
\[
\int_{2\varepsilon}^1 \frac{\varepsilon^2}{x^3}\,dx
=
O(1),
\]
we obtain
\[
\int_{2\varepsilon\le |x|\le 1}
(r_\varepsilon'')^2\,dx
=
\frac{9}{8}\log(1/\varepsilon)+O(1),
\]
and
\[
\int_{2\varepsilon\le |x|\le 1}
r_\varepsilon''
\frac{(r_\varepsilon')^2}{r_\varepsilon}\,dx
=
\frac{27}{8}\log(1/\varepsilon)+O(1).
\]

The remaining regions contribute only \(O(1)\).  Indeed, on
\(|x|\le 2\varepsilon\), the change of variables \(x=\varepsilon z\) gives
\[
r_\varepsilon(x)=\varepsilon^{3/2}(z^2+1)^{3/4},
\qquad
r_\varepsilon'(x)=\varepsilon^{1/2}R_1(z),
\qquad
r_\varepsilon''(x)=\varepsilon^{-1/2}R_2(z),
\]
where \(R_1\) and \(R_2\) are smooth and bounded for \(|z|\le 2\).  Hence
\[
\int_{|x|\le 2\varepsilon}(r_\varepsilon'')^2\,dx=O(1),
\]
and
\[
\int_{|x|\le 2\varepsilon}
r_\varepsilon''
\frac{(r_\varepsilon')^2}{r_\varepsilon}\,dx
=O(1).
\]
On the cutoff region \(1\le |x|\le 2\), the functions
\(u_\varepsilon\), \(u_\varepsilon'\), and \(u_\varepsilon''\) are uniformly
bounded in \(\varepsilon\), and \(u_\varepsilon\) is uniformly bounded away
from zero.  Therefore the cutoff contribution is \(O(1)\).  Outside
\((-2,2)\), we have \(u_\varepsilon=e^{-x^2}\), so the contribution is finite
and independent of \(\varepsilon\).

Consequently,
\[
D(u_\varepsilon)
=
\int_{\mathbb R}(u_\varepsilon'')^2\,dx
=
\frac{9}{8}\log(1/\varepsilon)+O(1),
\]
and
\[
A(u_\varepsilon)
=
\int_{\mathbb R}
u_\varepsilon''
\frac{(u_\varepsilon')^2}{u_\varepsilon}\,dx
=
\frac{27}{8}\log(1/\varepsilon)+O(1).
\]
Therefore
\[
\frac{A(u_\varepsilon)}{D(u_\varepsilon)}
=
\frac{
\frac{27}{8}\log(1/\varepsilon)+O(1)
}{
\frac{9}{8}\log(1/\varepsilon)+O(1)
}
\longrightarrow 3.
\]
Thus no constant smaller than \(3\) can hold in dimension one.

For \(d\ge2\), we embed the same one-dimensional sequence into higher
dimensions.  Write \(x=(x_1,y)\in\mathbb R\times\mathbb R^{d-1}\), and let
\(m=d-1\).  Choose a positive rapidly decaying function
\(h\in C^\infty(\mathbb R^m)\), for example
\[
h(y)=e^{-\|y\|^2/2}.
\]
For \(L>0\), set
\[
h_L(y):=h(y/L),
\qquad
U_{\varepsilon,L}(x_1,y):=u_\varepsilon(x_1)h_L(y).
\]
Then \(U_{\varepsilon,L}\) is smooth, positive, rapidly decaying, and belongs
to \(L^p(\mathbb R^d)\) for every \(p>0\).

The leading terms in \(D(U_{\varepsilon,L})\) and \(A(U_{\varepsilon,L})\)
come from the \(x_1\)-derivatives.  Indeed,
\[
\Delta U_{\varepsilon,L}
=
u_\varepsilon''h_L
+
u_\varepsilon\Delta h_L,
\]
and
\[
\frac{\|\nabla U_{\varepsilon,L}\|^2}{U_{\varepsilon,L}}
=
\frac{(u_\varepsilon')^2}{u_\varepsilon}h_L
+
u_\varepsilon\frac{\|\nabla h_L\|^2}{h_L}.
\]
Using the scaling \(h_L(y)=h(y/L)\), we have
\[
\int_{\mathbb R^m}h_L^2\,dy
=
L^m\int_{\mathbb R^m}h^2\,dy,
\]
whereas every integral involving at least one derivative of \(h_L\) is lower
order in \(L\).  More precisely,
\[
\int_{\mathbb R^m}\|\nabla h_L\|^2\,dy=O(L^{m-2}),
\qquad
\int_{\mathbb R^m}h_L\Delta h_L\,dy=O(L^{m-2}),
\]
and
\[
\int_{\mathbb R^m}(\Delta h_L)^2\,dy=O(L^{m-4}),
\qquad
\int_{\mathbb R^m}
\Delta h_L\frac{\|\nabla h_L\|^2}{h_L}\,dy
=O(L^{m-4}).
\]
Therefore, for fixed \(\varepsilon>0\),
\[
D(U_{\varepsilon,L})
=
L^m\left(\int_{\mathbb R^m}h^2\,dy\right)D(u_\varepsilon)
+
O_\varepsilon(L^{m-2}),
\]
and
\[
A(U_{\varepsilon,L})
=
L^m\left(\int_{\mathbb R^m}h^2\,dy\right)A(u_\varepsilon)
+
O_\varepsilon(L^{m-2}).
\]
Dividing these two estimates and sending \(L\to\infty\), we obtain
\[
\lim_{L\to\infty}
\frac{A(U_{\varepsilon,L})}{D(U_{\varepsilon,L})}
=
\frac{A(u_\varepsilon)}{D(u_\varepsilon)}.
\]
Finally, sending \(\varepsilon\to0\), we get
\[
\lim_{\varepsilon\to0}\lim_{L\to\infty}
\frac{A(U_{\varepsilon,L})}{D(U_{\varepsilon,L})}
=
3.
\]
Hence the same sharp constant \(C_u=3\) is forced in every dimension.
\end{proof}

A byproduct of the preceding proposition is the following sharp functional
inequality, which may be of independent analytic interest.

\begin{lemma}
Under the assumptions of Proposition \ref{prop:11}, one has
\begin{equation}\label{eq:grad-four-upper-nine-D}
\int \frac{\|\nabla u\|^4}{u^2}\,dx
\le
9\int (\Delta u)^2\,dx .
\end{equation}
The constant \(9\) is sharp in every dimension.
\end{lemma}

\begin{proof}
We have
\[
A=B-2T,
\]
or equivalently
\[
B=A+2T.
\]
Therefore, by the Cauchy--Schwarz inequality,
\[
\begin{aligned}
B
&=A+2T \\
&\le |A|+2|T| \\
&\le
\left(\int (\Delta u)^2\,dx\right)^{1/2}
\left(\int \frac{\|\nabla u\|^4}{u^2}\,dx\right)^{1/2} \\
&\quad
+2
\left(\int \|\nabla^2 u\|^2\,dx\right)^{1/2}
\left(\int \frac{\|\nabla u\|^4}{u^2}\,dx\right)^{1/2}.
\end{aligned}
\]
Moreover, we have
\[
\int \|\nabla^2 u\|^2\,dx
=
\int (\Delta u)^2\,dx.
\]
Hence
\[
B\le 3D^{1/2}B^{1/2}.
\]
If \(B=0\), the desired inequality is immediate. Otherwise, dividing by
\(B^{1/2}\) gives
\[
B^{1/2}\le 3D^{1/2},
\]
and therefore
\[
B\le 9D.
\]

It remains to prove sharpness. By the Cauchy--Schwarz inequality,
\[
A^2
=
\left(
\int \Delta u\frac{\|\nabla u\|^2}{u}\,dx
\right)^2
\le
\left(\int(\Delta u)^2\,dx\right)
\left(\int\frac{\|\nabla u\|^4}{u^2}\,dx\right)
=
DB.
\]
Suppose that the constant \(9\) in \eqref{eq:grad-four-upper-nine-D} could be
replaced by some \(C<9\). Then
\[
B\le CD.
\]
Combining this with \(A^2\le DB\), we would obtain
\[
A^2\le CD^2,
\]
and hence
\[
A\le \sqrt{C}\,D.
\]
Since \(\sqrt{C}<3\), this contradicts the sharpness of the constant \(3\) in
the inequality
\[
A\le 3D.
\]
Therefore no constant smaller than \(9\) can hold. Hence the constant \(9\) is
sharp in every dimension.
\end{proof}

We can now prove Theorem \ref{thm:ieee1}.

\begin{proof}[Proof of Theorem \ref{thm:ieee1}]
By \eqref{eq:S-second-D-A}, it is enough to prove \(D+\delta A\ge0\).

First consider \(d\ge2\).  Proposition \ref{prop:11} justifies the identities in the range \(\delta\in(-1,1]\).  If \(0\le\delta\le1\), then \eqref{eq:lower-square-identity} gives
\begin{equation}\label{eq:delta-positive}
D+\delta A=(1-\delta)D+\delta(D+A)\ge0.
\end{equation}
If \(-1/3\le\delta\le0\), then the sharp upper bound \eqref{eq:A-upper-three-D} gives
\begin{equation}\label{eq:delta-negative}
D+\delta A\ge D+3\delta D=(1+3\delta)D\ge0.
\end{equation}
Therefore \(S_q(\phi_t)\) is concave for
\[
-\frac13\le\delta\le1.
\]
Since \(q=2/(1+\delta)\), this is exactly \(1\le q\le3\).

In dimension one, Proposition \ref{prop:11} is available for all \(\delta>-1\).  Moreover, \eqref{eq:one-dimensional-3A-B} gives \(A=B/3\ge0\).  Hence \(D+\delta A\ge0\) for every \(\delta\ge0\).  For \(-1/3\le\delta\le0\), the same argument as \eqref{eq:delta-negative} applies.  Thus the admissible interval is \(\delta\in[-1/3,\infty)\), which is equivalent to \(q\in(0,3]\).  This proves the theorem.
\end{proof}
\begin{remark}
In the one-dimensional case ($d=1$), \cite{hung2022generalization,wu2025completely} studied the signs of the time derivatives of Tsallis entropy (and R\'enyi entropy in \cite{wu2025completely}) along the heat flow up to fourth order for certain ranges of the parameter $q$. Notably, they found that the Tsallis entropy is concave along the heat flow for $q \in (0,3]$, consistent with the results presented here. For higher-order time derivatives and the corresponding ranges of $q$ in one dimension, interested readers are referred to \cite{hung2022generalization,wu2025completely}.
\end{remark}
\begin{remark}[Sharpness of the constant and the endpoint \(q=3\)]
The constant \(C_u=3\) in \eqref{eq:74} is sharp in every dimension by Proposition \ref{lem:sharp-upper-bound}.  Consequently, the lower endpoint \(\delta=-1/3\), equivalently the upper endpoint \(q=3\), cannot be improved.  Indeed, suppose that for some \(\delta<-1/3\) one had
\[
D+\delta A\ge 0
\]
for all admissible \(u\). Since \(\delta<0\), this is equivalent to
\[
A\le -\frac1\delta D.
\]
But \(\delta<-1/3\) implies \(-1/\delta<3\). Hence there exists
\(\varepsilon\in(0,3)\) such that
\[
A\le (3-\varepsilon)D.
\]
This contradicts the sharpness of the constant \(3\) in the functional inequality
\[
A\le CD.
\]
\end{remark}

\begin{remark}[On the range \(q<1\) in higher dimensions]\label{rmk:20}
For \(d\ge2\), the proof above is restricted to \(q\ge1\) because the integration-by-parts justification in Proposition \ref{prop:11} is available for \(\delta\le1\).  This is not only a technical inconvenience.  Along the heat kernel \(p_t\),
\[
S_q(p_t)=\frac{1-q^{-d/2}(4\pi t)^{\frac d2(1-q)}}{q-1},
\]
and therefore
\[
\frac{d^2}{dt^2}S_q(p_t)
=\frac{d}{2t^2}\left(\frac{d(1-q)}2-1\right)
q^{-d/2}(4\pi t)^{\frac d2(1-q)}.
\]
Thus, whenever \(0<q<1-2/d\), the second derivative is positive along the heat kernel.  In particular, for \(d\ge3\), Tsallis entropy cannot be concave along the heat flow for the entire range \(q<1\).
\end{remark}

\section{Concavity of Tsallis Entropy Power Along Heat Flow}\label{sec:5}
This section proves Theorems \ref{thmm:2} and \ref{thm:asymptotic-power}.  We keep the notation
\[
U:=\int u_t^2dx=\int\phi_t^qdx,
\qquad
G:=\int\|\nabla u_t\|^2dx,
\]
and use \(D,A\) as in \eqref{eq:def-DABTG}.  We also set
\begin{equation}\label{eq:E-def}
E:=D+A=
\int u_t^2\|\nabla^2\log u_t\|^2dx.
\end{equation}
From \eqref{eq:debruijn-section4} and \eqref{eq:S-second-D-A},
\begin{equation}\label{eqq:81}
\begin{aligned}
\frac{d^2}{dt^2}N_{q,\mu}(\phi_t)
&=\left[\frac{\mu}{d}\frac{d^2}{dt^2}S_q(\phi_t)
+\left(\frac{\mu}{d}\frac{d}{dt}S_q(\phi_t)\right)^2\right]N_{q,\mu}(\phi_t) \\
&=\frac{4\mu(1+\delta)}{d}N_{q,\mu}(\phi_t)
\left[-D-\delta A+\frac{\mu(1+\delta)}{d}G^2\right].
\end{aligned}
\end{equation}
The prefactor is positive, so the sign is determined by the bracket.

We shall use two elementary estimates for \(G^2\).  First, by integration by parts and Cauchy--Schwarz,
\begin{equation}\label{eq:G2-UD}
G^2=\left(-\int u_t\Delta u_tdx\right)^2
\le \left(\int u_t^2dx\right)\left(\int(\Delta u_t)^2dx\right)
=UD.
\end{equation}
Second, using \(E=\int u_t^2\|\nabla^2\log u_t\|^2dx\),
\begin{equation}\label{eq:G2-UE}
G^2\le\frac d4 UE.
\end{equation}
Indeed,
\[
\int u_t^2\Delta\log u_tdx
=\int u_t\Delta u_tdx-\int\|\nabla u_t\|^2dx=-2G,
\]
and hence, by Cauchy--Schwarz and \((\operatorname{tr}M)^2\le d\|M\|^2\),
\[
4G^2
\le U\int u_t^2(\Delta\log u_t)^2dx
\le dU\int u_t^2\|\nabla^2\log u_t\|^2dx=dUE.
\]

\paragraph{The Shannon endpoint.}
At \(q=1\), we have \(\delta=1\), \(u_t=\sqrt{\phi_t}\), \(U=\int\phi_tdx=1\), and \(S_1=H\).  The bracket in \eqref{eqq:81} becomes
\[
-E+\frac{2\mu}{d}G^2.
\]
Using \eqref{eq:G2-UE} with \(U=1\),
\begin{equation}\label{eq:shannon-power-bracket}
-E+\frac{2\mu}{d}G^2
\le
-E+\frac{2\mu}{d}\frac d4E
=-\left(1-\frac\mu2\right)E.
\end{equation}
Therefore \(N_{1,\mu}\) is concave for \(0<\mu\le2\).  In particular, \(\mu=2\) gives Costa's entropy-power concavity.

\paragraph{The range \(1<q\le2\).}
Here \(0\le\delta<1\).  Since \(E=D+A\),
\begin{equation}\label{eq:positive-delta-bracket}
-D-\delta A=-(1-\delta)D-\delta E.
\end{equation}
We combine \eqref{eq:G2-UD} and \eqref{eq:G2-UE}.  For any \(\theta\in[0,1]\),
\begin{equation}\label{eq:convex-G}
G^2\le U\left(\theta D+(1-\theta)\frac d4E\right).
\end{equation}
Choose
\begin{equation}\label{eq:theta-choice}
\theta=\frac{d(1-\delta)}{d(1-\delta)+4\delta}.
\end{equation}
Then \eqref{eq:convex-G} gives
\begin{equation}\label{eq:positive-delta-positive-term}
\frac{\mu(1+\delta)}{d}G^2
\le
\frac{\mu(1+\delta)U}{d(1-\delta)+4\delta}\big((1-\delta)D+\delta E\big).
\end{equation}
Consequently the bracket in \eqref{eqq:81} is nonpositive whenever
\begin{equation}\label{eq:U-condition-delta-positive}
U\le\frac{d(1-\delta)+4\delta}{\mu(1+\delta)}.
\end{equation}
In terms of \(q=2/(1+\delta)\), this condition is
\begin{equation}\label{eq:U-condition-q-1-2}
\int\phi_t^qdx\le\frac{d(q-1)+2(2-q)}{\mu},
\qquad 1\le q\le2.
\end{equation}
At \(q=1\), this becomes \(1\le2/\mu\), exactly the Shannon endpoint above.

\paragraph{The range \(2\le q<3\).}
Here \(-1/3<\delta\le0\).  By the sharp estimate \eqref{eq:A-upper-three-D},
\begin{equation}\label{eq:negative-delta-entropy-term}
-D-\delta A\le-(1+3\delta)D.
\end{equation}
Combining this with \eqref{eq:G2-UD}, the bracket in \eqref{eqq:81} is bounded above by
\begin{equation}
\left[-(1+3\delta)+\frac{\mu(1+\delta)}{d}U\right]D.
\end{equation}
Thus the bracket is nonpositive whenever
\begin{equation}\label{eq:U-condition-delta-negative}
U\le\frac{d(1+3\delta)}{\mu(1+\delta)}.
\end{equation}
In terms of \(q\), this is
\begin{equation}\label{eq:U-condition-q-2-3}
\int\phi_t^qdx\le\frac{d(3-q)}{\mu},
\qquad 2\le q<3.
\end{equation}

\paragraph{Proof of Theorem \ref{thmm:2}.}
For \(q>1\), the quantity \(U_t=\int\phi_t^qdx\) is nonincreasing along the heat flow.  Indeed, \eqref{eq:first-U-derivative} gives
\begin{equation}\label{eq:U-monotone}
\frac{d}{dt}U_t=-2(1-\delta)G\le0,
\end{equation}
because \(q>1\) is equivalent to \(\delta<1\).  Therefore, if the initial datum satisfies the corresponding smallness condition in \eqref{eq:U-condition-q-1-2} or \eqref{eq:U-condition-q-2-3}, then the same condition holds for all \(t>0\), and \eqref{eqq:81} gives \(\frac{d^2}{dt^2}N_{q,\mu}(\phi_t)\le0\).  The case \(q=1\) was proved in \eqref{eq:shannon-power-bracket}.  This proves Theorem \ref{thmm:2}.

\paragraph{Asymptotic concavity.}
Let \(q>1\) and let \(\phi_0\) be any probability measure.  The heat-kernel \(L^q-L^1\) estimate gives
\begin{equation}\label{eq:heat-Lq-bound}
U_t=\|\phi_t\|_q^q\le\|p_t\|_q^q\|\phi_0\|_1^q
=K_{d,q}t^{-\frac{d(q-1)}2},
\qquad
K_{d,q}:=q^{-d/2}(4\pi)^{-\frac{d(q-1)}2}.
\end{equation}
Therefore, for \(1<q\le2\), the condition \eqref{eq:U-condition-q-1-2} holds whenever
\begin{equation}\label{eq:T-asymptotic-1-2}
t\ge
T_{d,q,\mu}^{(1)}
:=\left(\frac{\mu K_{d,q}}{d(q-1)+2(2-q)}\right)^{\frac{2}{d(q-1)}}.
\end{equation}
For \(2\le q<3\), the condition \eqref{eq:U-condition-q-2-3} holds whenever
\begin{equation}\label{eq:T-asymptotic-2-3}
t\ge
T_{d,q,\mu}^{(2)}
:=\left(\frac{\mu K_{d,q}}{d(3-q)}\right)^{\frac{2}{d(q-1)}}.
\end{equation}
Thus Theorem \ref{thm:asymptotic-power} holds with
\begin{equation}
T_{d,q,\mu}=
\begin{cases}
T_{d,q,\mu}^{(1)}, & 1<q\le2,\\[1mm]
T_{d,q,\mu}^{(2)}, & 2\le q<3.
\end{cases}
\end{equation}
At \(q=2\), the two formulas coincide.

\begin{remark}
The entropy-power result is not merely a consequence of the concavity of \(S_q(\phi_t)\).  The second derivative \eqref{eqq:81} contains the additional positive term \(\frac{\mu(1+\delta)}{d}G^2\).  The estimates \eqref{eq:G2-UD} and \eqref{eq:G2-UE}, especially the trace estimate leading to \eqref{eq:G2-UE}, are what allow the proof to include the Shannon entropy-power endpoint \(q=1\).
\end{remark}

\section{Conclusion}\label{sec:4}

In this work, we studied the evolution of Tsallis entropy along the heat flow and proved concavity in arbitrary dimensions for a sharp upper range of the entropic index. In dimension one, the admissible range recovers the known interval $q\in(0,3]$. In higher dimensions, we prove concavity for $q\in[1,3]$. The upper endpoint $q=3$ is sharp in every dimension. The proof is based on a nonlinear transformation of the heat equation, a sharp dimension-free functional inequality with constant $C_u=3$, and a rigorous verification of the integration-by-parts identities required in the argument.

The method also clarifies the relation between the present approach and SOS techniques. The key upper bound is obtained from an explicit integration-by-parts sum-of-squares identity. Thus the proof is compatible with the broad SOS philosophy, but differs from conventional computer-assisted SOS approaches: the square identity is constructed analytically through the transformed variable, rather than through a high-dimensional Gram-matrix or semidefinite-programming search. This provides a relatively simple route to a dimension-uniform estimate and explains why the endpoint $q=3$ appears naturally.

Several consequences follow from the main concavity theorem. We recover the generalized de Bruijn identity for Tsallis entropy and prove monotonicity of the associated $q$-Fisher information along the heat flow. We also establish concavity results for Tsallis entropy power. These results include the Shannon entropy-power case and recover Costa's EPI at the classical endpoint. For general initial data, we prove an asymptotic concavity result, showing that Tsallis entropy power becomes concave after sufficiently long heat-flow smoothing. The heat-kernel example shows that such restrictions are natural in the entropy-power problem, since concavity may fail at small times when $q>1$.

The results have potential relevance for information-theoretic and analytic problems involving Gaussian smoothing beyond the Shannon setting. Since heat flow corresponds to adding Gaussian noise, the monotonicity and concavity properties proved here provide tools for studying non-additive entropy under noise, especially for heavy-tailed or non-Gaussian distributions. They may also be useful in the study of generalized entropy power inequalities, Gaussian extremality, large-noise asymptotics, and generalized Fisher-information dissipation.

Several questions remain open. The lower endpoint in higher dimensions, especially the behavior of Tsallis entropy for $q<1$, is more delicate and cannot be treated by the present integration-by-parts argument in full generality. It would also be interesting to determine sharp entropy-power thresholds, to extend the method to other entropy functionals such as R\'enyi entropy, and to study other diffusion equations such as porous medium or Langevin-type flows. Finally, in view of the recent counterexamples to complete-monotonicity-type conjectures, future extensions to higher-order derivatives should be formulated as precise finite-order and parameter-dependent problems, rather than as unrestricted complete-monotonicity statements.

\section*{Conflict of Interest }
There is no conflict of interest.
\section*{Acknowledgment}
LS acknowledges support from the Munich Center for Machine Learning .


  
   
    

\ifCLASSOPTIONcaptionsoff
  \newpage
\fi

\bibliography{bib}
\bibliographystyle{IEEEtran}

\end{document}